\documentclass[11pt,a4paper,reqno]{amsart}

\usepackage{amsmath,amssymb,amsfonts,amsthm,mathtools,bm}
\usepackage{setspace}
\usepackage{xcolor,graphicx}
\usepackage{mathrsfs}%
\usepackage{pifont}
\usepackage[top=1.5in, bottom=1.2in,
  left=1in, right=1in]{geometry}
\usepackage[ruled,noline]{algorithm2e}
\usepackage[mathscr]{euscript}
\usepackage[round,authoryear]{natbib}
\usepackage{hyperref}
\hypersetup{
    colorlinks = true,   
    citecolor = blue,    
    linkcolor = red,     
    urlcolor = green     
}
\usepackage{tabularx}
\usepackage{makecell}
\usepackage{bbm}
\usepackage{subcaption}
\usepackage{multirow}%
\usepackage{amsthm}%
\usepackage{mathrsfs}%
\usepackage[title]{appendix}%
\usepackage{xcolor}%
\usepackage{textcomp}%
\usepackage{setspace}
\usepackage{manyfoot}%
\usepackage{booktabs}%
\usepackage{algpseudocode}%
\usepackage{booktabs}
\usepackage{pifont}
\usepackage{subcaption}%
\usepackage{makecell}
\usepackage{tcolorbox}
\usepackage[ruled,noline]{algorithm2e}
\newcommand{\cmark}{\ding{51}} 
\newcommand{\xmark}{\ding{55}} 

\newcommand{\SSGEN}{{\tt{SSGEN}}}

\newcommand{\mv}[1]{\boldsymbol{#1}}

\newcommand{\ww}{\boldsymbol{w}}

\newcommand{\xx}{\boldsymbol{x}}
\newcommand{\XX}{\boldsymbol{\xi}}

\newcommand{\yy}{\boldsymbol{y}}
\newcommand{\zz}{\boldsymbol{z}}

\newcommand{\VaR}{\textsc{VaR}}
\newcommand{\CVaR}{\textsc{CVaR}}

\newcommand{\Pphi}{\boldsymbol{\Phi}}
\newcommand{\pphi}{\boldsymbol{\phi}}

\newcommand{\fgen}{f_{\textsf {gen}}}

\newtheorem{assumption}{Assumption}
\newtheorem{theorem}{Theorem}[section]
\newtheorem{lemma}{Lemma}[section]
\newtheorem{proposition}{Proposition}[section]
\newtheorem{corollary}{Corollary}[section]
\renewcommand{\arraystretch}{1.5}

\theoremstyle{thmstyleone}%

\theoremstyle{thmstyletwo}%

\theoremstyle{thmstylethree}%

\title[Learning Stress Laws]{An Extreme Value Perspective on Learning Stress Laws}
\author{{\large M\MakeLowercase{antu}
    G\MakeLowercase{upta}}}
    \author{{\large A\MakeLowercase{nand}
    D\MakeLowercase{eo}}}
  \address{Indian Institute of Management, Bannerghatta Road, Bangalore 560076} \email{  mantu.gupta23@iimb.ac.in, anand.deo@iimb.ac.in}
\begin{document}
\maketitle
\begin{abstract}
We introduce Self-Similar Generative Estimation (SS-GEN), a method for simulating multivariate tail events and estimating rare-event probabilities in both heavy and light-tailed settings. SS-GEN exploits asymptotic tail structure to decompose the tail distribution into an explicit radial component and a nonparametric angular component, reducing tail learning to a compact-domain problem that can be handled by off-the-shelf deep generative models. The resulting sampler generates representative extreme scenarios and supports probability estimation far beyond the observed data. Under mild nonparametric tail assumptions, we show that the SS-GEN density is asymptotically exact in the tail, with vanishing uniform relative error for regularly varying distributions and vanishing uniform log-relative error for Weibull-type distributions. Unlike existing approaches that rely on specialized architectures or parametric tail specifications, SS-GEN leverages asymptotic tail structure to enable standard generative models to generate representative extreme samples and estimate rare-event probabilities beyond the observed data.
\end{abstract}
\section{INTRODUCTION}
\label{sec:intro}
Understanding extreme events is central to risk management in finance, operations research, climate science, power systems, and safety-critical engineering (see \cite{Glasserman1}; \cite{blanchet2024efficient};  \cite{buishand2008spatial};  \cite{bienstock2014chance}; \cite{choi2024reduced}). Accurate modeling of such events requires understanding the behavior of the data-generating distribution in its tail. Yet extreme events are rarely observed, so available datasets typically contain only limited tail information. This scarcity makes tail modeling statistically difficult and can lead to substantial misestimation of rare-event probabilities.

These challenges motivate data-driven methods that can learn tail behavior from limited observations and extrapolate beyond the observed sample. In recent years, deep generative models (DGMs) have emerged as flexible tools for learning complex, high-dimensional distributions directly from data, including Variational Autoencoders (VAEs) \cite{Kingma2013}, Generative Adversarial Networks (GANs) \cite{goodfellow2014generative}, Normalizing Flows (NFs) \cite{Rezende2015}, and Diffusion Models \cite{Sohl-Dickstein2015,Ho2020}. Despite their  empirical success, however, standard DGMs often struggle to capture tail behavior accurately (see \cite{wiese2019copula,liu2024learning}). A key reason is that they are typically trained to fit the bulk of the distribution, so rare tail events have limited influence on the learned model. This suggests that the main obstacle is not a lack of expressive power, but the absence of an appropriate structural reduction of the tail-generation problem.

This paper addresses that challenge by exploiting asymptotic structure in the tail distribution. The central question we ask is: \emph{can tail structure be used to reduce extreme-sample generation to a simpler learning problem that standard DGMs can solve?} We answer the above question via the following contributions.

\noindent \textbf{i) A general representation for tail densities: }We show that, under a regularity condition, the tail density admits an asymptotic spliced radial-angular representation. The radial component captures extremal magnitude, while the compactly supported angular component captures tail dependence and can be learned non-parametrically.  Under regular variation (Assumption~\ref{assume:ht_data}), the radial component is asymptotically Pareto and independent of the angular law. Under Weibull-type tails (Assumption~\ref{assume:lt_data}), it is conditionally Gamma, with parameters determined by the angular law. We use "Self-Similar" to denote this stabilization: it is driven by Pareto scaling in heavy-tailed regimes, and by direction-dependent logarithmic decay in Weibull-type regimes.

\noindent \textbf{ii) A tail sampler based on standard DGMs: }The representation in \textbf{(i)} makes standard DGMs natural for tail generation: they learn only the angular law, while the radial law is explicit. We analyze an idealized SS-GEN sampler with known threshold angular density and prove asymptotic exactness of the spliced density, in uniform relative error for regularly varying tails and uniform log-relative error for Weibull-type tails. In practice, the angular law is learned non-parametrically using an off-the-shelf DGM, and the radial component is modeled by a low-dimensional parametric specification. Because angular learning is done at intermediate rarity levels with enough exceedances for training, SS-GEN extrapolates to regions of probability order $n^{-\kappa}$, for suitable $\kappa>1$. 

\noindent \textbf{Literature review and positioning:} Prior work on using DGMs for tail modeling, especially in heavy-tailed settings, falls into three broad strands. The first modifies standard generative architectures to explicitly encode heavy-tailed behavior. In the GAN setting, \cite{allouche2022ev} introduce generators with smooth bounded tail-index functions. \cite{cont2022tail} develops a framework for simulating multi-asset market scenarios using a bespoke loss function targeting tail risk through VaR and CVaR, although its objective differs from sampling from a multivariate extreme distribution.

The second strand modifies the latent prior itself. Examples include heavy-tailed latent distributions for GANs \cite{huster2021pareto,li2024generalized,girard:hal-04700084}, heavy-tailed priors and conditional structures in VAEs \cite{lafon2023vae,kim2024t3variationalautoencoderlearningheavytailed}, and heavy-tailed base noise in diffusion models \cite{liu2024learning,pandey2024heavytaileddiffusionmodels}.

The third strand exploits structural representations of tail behavior to simplify generation. Related examples include several recent works \cite{bhatia2021exgan,boulaguiem2022modeling,lhaut2025wasserstein}, which use tail-specific structure or decompositions to facilitate extreme-sample generation. Our approach is closest in spirit to this line of work, but differs in two important respects. First, rather than modifying architectures or replacing standard latent priors, our decomposition reduces the problem to learning a compactly supported angular component. Thus, our contribution is not a new generative architecture, but a structural reduction that makes standard DGMs applicable to tail generation. Second, unlike related approaches in this line of work, we provide approximation guarantees for the resulting density. Table~\ref{tab:rw_min_compare} summarizes representative papers from each strand and compares them only along the dimensions most relevant to our contribution; it is not intended as an exhaustive survey.

\begin{table}[htpb]
\centering
\small
\renewcommand{\arraystretch}{0.9}
\caption{Comparison with related work. ``Approximation Guarantees'' indicates whether the paper establishes asymptotic approximation of the learned tail model, or its induced density, to the target tail law.}
\begin{tabular}{lccc}
\toprule
\textbf{Paper}
& \begin{tabular}[c]{@{}c@{}}\textbf{Specialized Architecture }\\\textbf{Required?}\end{tabular}
& \begin{tabular}[c]{@{}c@{}}\textbf{Tail-Modified}\\\textbf{Prior Required?}\end{tabular}
& \begin{tabular}[c]{@{}c@{}}\textbf{Approximation}\\\textbf{Guarantees?}\end{tabular} \\
\midrule

\textbf{This paper}
& \xmark
& \xmark
& \cmark\\
\textbf{\cite{allouche2022ev}}  & \cmark
& \xmark
& \xmark\\
\textbf{\cite{huster2021pareto}}  & \cmark
& \cmark
& \xmark\\
\textbf{\cite{boulaguiem2022modeling}}  & \xmark
& \xmark
& \xmark\\
\textbf{\cite{lhaut2025wasserstein}}  & \xmark
& \cmark
& \xmark\\
\bottomrule
\end{tabular}
\renewcommand{\arraystretch}{1}
\label{tab:rw_min_compare}
\end{table}
\noindent\textbf{Outline:} The rest of the paper is structured as follows: Section~\ref{sec:assump_model}  formulates the problem and states the assumptions under which we prove our results. Sections~\ref{sec:ht_sim} and \ref{sec:lt_sim} develop our theory for the heavy-tailed (Pareto) and light-tailed (Weibullian) settings respectively. We demonstrate the applicability of our theory via experiments in Section~\ref{sec:numericals}. Proofs of theoretical results are given in Section~\ref{sec:proofs}.

\section{Problem Description and Assumptions}\label{sec:assump_model}
Let $\XX$ be a random vector in $\mathbb R^d$ with density $f_{\XX}$ supported on a cone $\mathcal E \subset \mathbb R^d$. Write $(R,\Pphi)=(\|\XX\|,\XX/\|\XX\|)$ for its radial-angular decomposition, and let $f_{R,\Pphi}$ denote the corresponding joint density. Our goal is to construct a family of generative approximations $\{\fgen^{(t)}:t>0\}$ for the tail density, indexed by a radial threshold parameter $t$. The role of $t$ is asymptotic: as $t\to\infty$, $\fgen^{(t)}$ should approximate $f_{R,\Pphi}$ uniformly in the tail under an error criterion appropriate to the tail regime. Thus, the mathematical object of interest is the tail density itself, rather than the law on any single rare set. To cover the heavy and light-tailed settings of interest, we impose the following assumptions on $f_{\XX}$.

\begin{assumption}[\textbf{Regularly-varying data}]\label{assume:ht_data}
The density of $\XX$ satisfies, uniformly over compact subsets of $\mathcal E \setminus \{\mv 0\}$,
\begin{equation}\label{eqn:ht_data}
n^{s+d} f_{\XX}(n\zz) \to \varphi^\star(\zz)
\qquad \text{as } n\to\infty,
\end{equation}
for some $s>0$ and some function $\varphi^\star : \mathcal E \to (0,\infty)$.
\end{assumption}

\begin{assumption}[\textbf{Weibull-type data}]\label{assume:lt_data}
The density of $\XX$ satisfies, uniformly over compact subsets of $\mathcal E \setminus \{\mv 0\}$,
\begin{equation}\label{eqn:lt_data}
\frac{-\log f_{\XX}(n\zz)}{n^{\gamma}} \to \Lambda^\star(\zz)
\qquad \text{as } n\to\infty,
\end{equation}
for some $\gamma>0$ and some function $\Lambda^\star : \mathcal E \to (0,\infty)$.
\end{assumption}

Assumption~\ref{assume:ht_data} covers a broad class of heavy-tailed densities with regularly varying tails, including multivariate $t$, Pareto, and Cauchy-type models, as well as mixtures built from such distributions. Assumption~\ref{assume:lt_data} captures an analogous class of lighter-tailed models with Weibull-type decay, including Gaussian and Weibull laws, as well as broad subclasses of elliptical distributions. The assumptions are nonparametric tail-domain stability conditions: no joint parametric family is imposed, and dependence is captured by the limiting angular profile. In practice, radial tail estimates, angular summaries, and conditional angular samples should be stable across high thresholds. SS-GEN may fail when tail mechanisms switch with level, mixture components dominate at different extreme scales, or directional decay oscillates; it is intended for settings with a stable radial-angular tail representation.

\begin{lemma}\label{lem:radius_angle_split}
The joint density of $(R,\Pphi)$ satisfies
\begin{equation}\label{eqn:radius_angle_split}
    f_{R,\Pphi}(r,\pphi) = r^{d-1} f_{\XX}(r\pphi).
\end{equation}
\end{lemma}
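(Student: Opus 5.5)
This is the polar change-of-variables formula. The density of $(R,\Pphi)$ is understood with respect to the product of Lebesgue measure $dr$ on $(0,\infty)$ and the surface (Hausdorff) measure $\sigma$ on the unit sphere $S^{d-1}=\{\pphi:\|\pphi\|=1\}$. The plan is to express, for an arbitrary Borel set $A\subset(0,\infty)$ and Borel $B\subset S^{d-1}$, the probability $P(R\in A,\Pphi\in B)$ as an integral over $A\times B$ of the claimed integrand. We take $\|\cdot\|$ to be the Euclidean norm, since that is what makes the surface measure and the Jacobian standard. We also extend $f_{\XX}$ by zero off $\mathcal E$, so that the support assumption plays no role beyond restricting the angular support to $\mathcal E\cap S^{d-1}$.

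\textbf{Step 1.} Consider the map $T:\mathbb R^d\setminus\{\mv 0\}\to(0,\infty)\times S^{d-1}$ given by $T(\xx)=(\|\xx\|,\xx/\|\xx\|)$. Its inverse is $T^{-1}(r,\pphi)=r\pphi$, so $T$ is a bijection. Since $\XX$ has a density, $P(\XX=\mv 0)=0$ and the origin can be ignored. Therefore
$$
P(R\in A,\Pphi\in B)=\int_{\{\xx:\|\xx\|\in A,\ \xx/\|\xx\|\in B\}} f_{\XX}(\xx)\,d\xx .
$$

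\textbf{Step 2.} Apply the polar-coordinates integration formula: for nonnegative measurable $g$,
$$
\int_{\mathbb R^d} g(\xx)\,d\xx=\int_0^\infty\!\!\int_{S^{d-1}} g(r\pphi)\,r^{d-1}\,\sigma(d\pphi)\,dr .
$$
Take $g=f_{\XX}\mathbf 1_{T^{-1}(A\times B)}$. Then $g(r\pphi)=f_{\XX}(r\pphi)\mathbf 1_A(r)\mathbf 1_B(\pphi)$, which gives
$$
P(R\in A,\Pphi\in B)=\int_A\int_B r^{d-1}f_{\XX}(r\pphi)\,\sigma(d\pphi)\,dr .
$$

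\textbf{Step 3.} Rectangles $A\times B$ form a $\pi$-system that generates the Borel $\sigma$-algebra of $(0,\infty)\times S^{d-1}$. The two measures agree on all such rectangles, and both are finite: the right side has total mass $\int f_{\XX}=1$ by Step 2. The $\pi$-$\lambda$ theorem therefore extends the agreement to all Borel sets. This identifies $r^{d-1}f_{\XX}(r\pphi)$ as the joint density, almost everywhere.

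The only nontrivial ingredient is the polar integration formula. Its proof checks the formula on $g=\mathbf 1$ of sectors $\{r\pphi: r\in(a,b),\pphi\in B\}$, using the scaling $\lambda(tE)=t^d\lambda(E)$ and the definition $\sigma(B)=d\,\lambda(\{r\pphi:0<r\le 1,\pphi\in B\})$. It then extends to general $g$ by monotone class and monotone convergence. I would cite this as standard rather than reprove it. The formula fixes the Jacobian factor $r^{d-1}$, and that factor is the whole content of the lemma.
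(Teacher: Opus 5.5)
Your proof is correct and follows the same route as the paper: the change of variables $\zz=r\pphi$ with Jacobian factor $r^{d-1}$. The paper simply writes $f_{R,\Pphi}=f_{\XX}/|J|$ with $|J(\zz)|=\|\zz\|^{1-d}$, whereas you make the reference measure $dr\otimes\sigma$ explicit and justify the identity via the polar integration formula plus a $\pi$-$\lambda$ extension. (Because your $\sigma(B)=d\,\lambda(\{r\pphi:0<r\le 1,\pphi\in B\})$ is the cone measure, the same argument works for any norm, so the restriction to the Euclidean norm is not needed.)
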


The identity in \eqref{eqn:radius_angle_split} is the starting point for the construction of SSGEN. Our development proceeds in two steps. First, we derive a tail asymptotic expansion of \eqref{eqn:radius_angle_split} using the assumed regularity of $f_{\XX}$. Second, we use this expansion to construct a tractable generative approximation to the tail density.
\section{Simulation of Heavy Tailed Distributions}\label{sec:ht_sim}
We now specialize the representation in Lemma~\ref{lem:radius_angle_split} to the heavy-tailed regime of Assumption~\ref{assume:ht_data}. 
\begin{proposition}\label{prop:Polar-to-MRV}
Let Assumption~\ref{assume:ht_data} hold. Then the conditional joint density of $(R,\Pphi)$ given $R>t$ satisfies
\begin{equation}\label{eqn:R-Phi_density}
    f_{R,\Pphi}(r,\pphi \mid R>t)
    =
    s t^{s} r^{-(s+1)} f^{(t)}(\pphi)\bigl(1+\varepsilon_t(r,\pphi)\bigr),
    \qquad r>t,
\end{equation}
where $f^{(t)}$ denotes the conditional density of $\Pphi$ given $R>t$, and $\sup_{r>t,\ \pphi\in \mathcal E \cap S_\infty^{d-1}} |\varepsilon_t(r,\pphi)| \to 0$ as  $t\to\infty.$
\end{proposition}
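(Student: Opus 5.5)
The plan is to insert the regular-variation limit \eqref{eqn:ht_data} directly into the polar identity of Lemma~\ref{lem:radius_angle_split}, and then track relative errors through the two normalizations, namely the tail probability $P(R>t)$ and the conditional angular density $f^{(t)}$. Write $\mathcal S:=\mathcal E\cap S^{d-1}_\infty$. I take $\mathcal E$ closed, so that $\mathcal S$ is a compact subset of $\mathcal E\setminus\{\mv 0\}$. By \eqref{eqn:ht_data} applied with $n=r$ and $\zz=\pphi\in\mathcal S$,
\[
\eta(r):=\sup_{\pphi\in\mathcal S}\Bigl|\frac{r^{s+d}f_{\XX}(r\pphi)}{\varphi^\star(\pphi)}-1\Bigr|\longrightarrow 0 \qquad (r\to\infty).
\]
This requires $\inf_{\mathcal S}\varphi^\star>0$. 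Since $\varphi^\star$ is the uniform limit on $\mathcal S$, it is bounded there. Positivity of the infimum I would obtain from continuity plus positivity of $\varphi^\star$ on the compact set $\mathcal S$; if needed I will add it as a mild standing condition. Combining with \eqref{eqn:radius_angle_split} gives
\[
f_{R,\Pphi}(r,\pphi)=r^{-(s+1)}\varphi^\star(\pphi)\bigl(1+\delta(r,\pphi)\bigr),\qquad |\delta(r,\pphi)|\le\eta(r).
\]
Consequently $\bar\eta(t):=\sup_{r>t}\eta(r)\to0$, and this bound is uniform over all $r>t$, not merely on compact $r$-ranges.

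Next I handle the normalizations. Let $c:=\int_{\mathcal S}\varphi^\star\,d\sigma$, which lies in $(0,\infty)$. Integrating the display over $r>t$ and $\pphi\in\mathcal S$ gives
\[
P(R>t)=\frac{c\,t^{-s}}{s}\,(1+\alpha_t),\qquad |\alpha_t|\le\bar\eta(t).
\]
Integrating only in $r$ gives the conditional angular density
\[
f^{(t)}(\pphi)=\frac{\int_t^\infty f_{R,\Pphi}(r,\pphi)\,dr}{P(R>t)}=\frac{\varphi^\star(\pphi)}{c}\cdot\frac{1+\beta_t(\pphi)}{1+\alpha_t},
\]
where $|\beta_t(\pphi)|\le\bar\eta(t)$ uniformly in $\pphi$. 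The bound on $\beta_t$ holds because $\int_t^\infty r^{-(s+1)}(1+\delta)\,dr$ equals $s^{-1}t^{-s}$ times a weighted average of $1+\delta$.

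Finally I form the ratio. Dividing $f_{R,\Pphi}(r,\pphi)/P(R>t)$ by $s t^{s}r^{-(s+1)}f^{(t)}(\pphi)$, the factors of $\varphi^\star$, $c$ and $1+\alpha_t$ all cancel, and the ratio equals
\[
1+\varepsilon_t(r,\pphi)=\frac{1+\delta(r,\pphi)}{1+\beta_t(\pphi)}.
\]
Hence $\sup_{r>t,\,\pphi\in\mathcal S}|\varepsilon_t|\le 2\bar\eta(t)/(1-\bar\eta(t))\to0$, which is \eqref{eqn:R-Phi_density}. The point to emphasise is that the statement is phrased in terms of $f^{(t)}$ rather than its limit $\varphi^\star/c$, so no approximation error from the angular law enters $\varepsilon_t$.

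The only genuine obstacle is the first step: converting the additive uniform convergence in \eqref{eqn:ht_data} into a uniform \emph{relative} error on $\mathcal S$. This needs a positive lower bound on $\varphi^\star$ over the unit sphere section of the cone. I would first try to deduce it from the assumption that $\varphi^\star$ maps into $(0,\infty)$, together with lower semicontinuity. Otherwise I would state it explicitly, noting that it holds for all the examples listed after Assumption~\ref{assume:ht_data}.
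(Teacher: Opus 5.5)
Your proof is correct and follows the paper's argument. Both insert \eqref{eqn:ht_data} into the polar identity to obtain $r^{-(s+1)}\varphi^\star(\pphi)(1+\varepsilon)$ uniformly over $r>t$, let $\varphi^\star$ cancel against the angular density, and bound the ratio of $1+\varepsilon$ to its weighted radial average by $2\eta_t/(1-\eta_t)$. Your explicit computation of $P(R>t)$ and $c$ simply cancels, whereas the paper factors the conditional density directly as $f^{(t)}(\pphi)$ times the conditional radial law given the angle. The condition $\inf_{\mathcal E\cap S^{d-1}_\infty}\varphi^\star>0$ that you flag, needed to turn the additive convergence into relative error, is also used tacitly in the paper's proof, so stating it explicitly is an improvement rather than a gap.
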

Proposition~\ref{prop:Polar-to-MRV} shows that, beyond a large radial threshold $t$, the conditional density of $(R,\Pphi)$ is asymptotically separable into a Pareto radial component and the conditional angular density $f^{(t)}(\pphi)$. This motivates the spliced approximation 
\begin{equation}\label{eqn:SSGEN_Ht}
    \fgen^{(t)}(r,\pphi)
    =
    f_{R,\Pphi}(r,\pphi)\,\mathbf{1}(r<t)
    +
    p_t\, s t^s r^{-(s+1)} f^{(t)}(\pphi)\,\mathbf{1}(r>t),
\end{equation}
where $p_t := \mathbb P(R>t)$.
If the threshold $t$ is chosen so that the conditional angular density $f^{(t)}(\pphi)$ can be learned from the available data, then \eqref{eqn:SSGEN_Ht} provides a candidate generative approximation to the tail density. The next theorem shows that this approximation is asymptotically exact in uniform relative error, thereby reducing the approximation of the full tail density to the problem of learning the angular component alone.

\begin{theorem}\label{thm:heavy_vanishing_RE}
Assume that Assumption~\ref{assume:ht_data} holds, and let $\fgen^{(t)}$ be defined by \eqref{eqn:SSGEN_Ht}. Then, as $t\to\infty$,
\begin{equation}\label{eqn:heavy_vanishing_RE}
    \sup_{(r,\pphi)}
    \left|
    \frac{f_{R,\Pphi}(r,\pphi)}{\fgen^{(t)}(r,\pphi)} - 1
    \right|
    \to 0.
\end{equation}
\end{theorem}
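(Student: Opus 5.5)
The plan is to split the supremum in \eqref{eqn:heavy_vanishing_RE} according to the two pieces of the spliced density \eqref{eqn:SSGEN_Ht}. The region $r<t$ needs no work: there $\fgen^{(t)}(r,\pphi)=f_{R,\Pphi}(r,\pphi)$, so the ratio is identically $1$ wherever $f_{R,\Pphi}>0$. The whole proof therefore reduces to controlling the ratio on $\{r>t\}$. The set $\{r=t\}$ has measure zero and can be ignored, or assigned to either piece.

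\textbf{Step 1: rewrite the tail ratio.} On $r>t$ I factor the true joint density through the conditioning event:
\[
f_{R,\Pphi}(r,\pphi)=p_t\, f_{R,\Pphi}(r,\pphi\mid R>t).
\]
Proposition~\ref{prop:Polar-to-MRV} then gives
\[
f_{R,\Pphi}(r,\pphi)=p_t\, s t^{s} r^{-(s+1)} f^{(t)}(\pphi)\bigl(1+\varepsilon_t(r,\pphi)\bigr).
\]
Dividing by the second term of \eqref{eqn:SSGEN_Ht}, the factors $p_t$, $s t^s r^{-(s+1)}$ and $f^{(t)}(\pphi)$ all cancel, so
\[
\frac{f_{R,\Pphi}}{\fgen^{(t)}}-1=\varepsilon_t(r,\pphi)\qquad\text{for } r>t.
\]

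\textbf{Step 2: conclude.} Combining the two regions,
\[
\sup_{(r,\pphi)}\left|\frac{f_{R,\Pphi}}{\fgen^{(t)}}-1\right|\le \sup_{r>t,\ \pphi}|\varepsilon_t(r,\pphi)|,
\]
and the right-hand side tends to $0$ by the uniform statement in Proposition~\ref{prop:Polar-to-MRV}.

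\textbf{Main obstacle: positivity of the denominator.} The cancellation in Step 1 requires $f^{(t)}(\pphi)>0$ on the angular support $\mathcal E\cap S^{d-1}_\infty$, so that no $0/0$ terms arise. I would derive this from Lemma~\ref{lem:radius_angle_split} and Assumption~\ref{assume:ht_data}. The lemma gives
\[
f^{(t)}(\pphi)=p_t^{-1}\int_t^\infty r^{d-1}f_{\XX}(r\pphi)\,dr .
\]
The assumption applied with $\zz=\pphi$ gives $f_{\XX}(r\pphi)\sim r^{-(s+d)}\varphi^\star(\pphi)$ uniformly in $\pphi$ on the compact angular set, since $\varphi^\star>0$ there. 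Hence the integrand is eventually strictly positive and $f^{(t)}(\pphi)>0$ for large $t$. Outside the support both densities vanish and those points are excluded from the supremum by convention.

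If Proposition~\ref{prop:Polar-to-MRV} could not be invoked as a black box, the substantive step would be proving its uniformity over the unbounded range $r\in(t,\infty)$. For that I would substitute $r=t u$ with $u>1$ and use Assumption~\ref{assume:ht_data} on the compact annuli $\{1\le\|\zz\|\le M\}$. The region $u>M$ would be handled by a Potter-type bound, which has to be extracted from the assumed uniform convergence along scalings.
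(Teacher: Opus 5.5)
Your proposal is correct and matches the paper's proof essentially line for line: the ratio is identically $1$ on $\{r\le t\}$, and on $\{r>t\}$ writing $f_{R,\Pphi}=p_t\, f_{R,\Pphi}(\cdot\mid R>t)$ and cancelling against the tail piece of \eqref{eqn:SSGEN_Ht} reduces everything to $\sup_{r>t,\pphi}|\varepsilon_t(r,\pphi)|\to 0$ from Proposition~\ref{prop:Polar-to-MRV}. Your positivity check on $f^{(t)}$ is a detail the paper leaves implicit. The Potter-type detour is unnecessary: applying Assumption~\ref{assume:ht_data} with $\zz=\pphi$ on the compact angular set and scale $n=r$ already gives uniformity over all $r>t$, which is how the paper's proof of Proposition~\ref{prop:Polar-to-MRV} proceeds.
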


Theorem~\ref{thm:heavy_vanishing_RE} shows that approximating the tail density reduces to learning the angular component $f^{(t)}$ on a compact domain, while the radial behavior is supplied analytically by the tail asymptotics. Thus, tail learning is converted from an unbounded density-estimation problem on $\mathcal E$ into a bounded-domain nonparametric problem. Algorithm~\ref{algo:generate_extremes} implements this decomposition in practice.

\begin{algorithm}[h] 
  \caption{$\SSGEN$-HT: Sampling from the conditional tail model for regularly varying data}  \label{algo:generate_extremes}
  \KwIn{Data samples $\XX_1, \ldots \XX_n$, threshold $t$, DGM mechanism
  }
\noindent\textbf{I) Training Step}
\begin{enumerate}
    \item[(i)]  Filter out the tail samples as $\mathcal Y = \{i: R_i \geq t\}$. Construct the dataset $ \Theta_{\mathcal Y} = \left\{\Pphi_i = \frac{\XX_i}{\|\XX_i\|} : i \in \mathcal Y \right\}.$
    \item[(ii)] Train a deep generative model on the dataset $\Theta_{\mathcal Y}$.
\end{enumerate}
\noindent\textbf{II) Generation Step:} 
\begin{enumerate}
    \item[(i)] Sample $\Pphi^*$ from the trained DGM.
    \item[(ii)] Sample $R_t$ from the Pareto exceedance density $s t^s r^{-(s+1)}\mathbf 1(r>t)$.
\end{enumerate}
\noindent\textbf{Output: }Sample $H_t = R_t\cdot\Pphi^*$ 
\end{algorithm}
\noindent \textbf{Generating samples from unobserved tail regions: } Algorithm~\ref{algo:generate_extremes} requires learning the conditional angular density only at an intermediate threshold $t$, namely under the law of $\Pphi$ given $R>t$. More generally, the same mechanism applies to a broader class of rare sets lying sufficiently deep in the tail. For concreteness, we state the result for the canonical loss exceedance set
\begin{equation}\label{eqn:Ru}
\mathcal R_u := \{\zz : L(\zz) > u\},
\end{equation}
where $L:\mathcal E\to\mathbb R$ is a loss function; examples include network losses, ReLU-network outputs, linear portfolio losses, and objectives defined through linear programs. The next corollary extends Theorem~\ref{thm:heavy_vanishing_RE} to densities conditioned on $\mathcal R_u$, where in practice there may be little to no observed data.
\begin{corollary}\label{cor:gen_extreme_samples}
Suppose that, for all sufficiently large $u$ and $t$, $\mathcal R_u \subset \{\zz:\|\zz\|>t\}$. Under the assumptions of Theorem~\ref{thm:heavy_vanishing_RE}, the approximation in \eqref{eqn:heavy_vanishing_RE} continues to hold when $f_{R,\Pphi}$ and $\fgen^{(t)}$ are replaced by their conditional densities given $\mathcal R_u$.
\end{corollary}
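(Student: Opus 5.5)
The plan is to express both conditional densities as restrictions of the unconditional ones, renormalized, and then use the uniform relative error bound of Theorem~\ref{thm:heavy_vanishing_RE} twice: once pointwise and once inside the normalizing integrals. Write $\widetilde{\mathcal R}_u := \{(r,\pphi): L(r\pphi)>u\}$ for the image of $\mathcal R_u$ in polar coordinates. The two conditional densities are
\[
f_{R,\Pphi}(r,\pphi\mid \mathcal R_u) = \frac{f_{R,\Pphi}(r,\pphi)\mathbf 1_{\widetilde{\mathcal R}_u}(r,\pphi)}{\int_{\widetilde{\mathcal R}_u} f_{R,\Pphi}},
\qquad
\fgen^{(t)}(r,\pphi\mid \mathcal R_u) = \frac{\fgen^{(t)}(r,\pphi)\mathbf 1_{\widetilde{\mathcal R}_u}(r,\pphi)}{\int_{\widetilde{\mathcal R}_u} \fgen^{(t)}}.
\]
Both denominators are positive; this is implicitly assumed whenever the conditioning is meaningful. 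Hence, on $\widetilde{\mathcal R}_u$, their ratio factors as
\[
\frac{f_{R,\Pphi}(r,\pphi\mid \mathcal R_u)}{\fgen^{(t)}(r,\pphi\mid \mathcal R_u)} = \frac{f_{R,\Pphi}(r,\pphi)}{\fgen^{(t)}(r,\pphi)}\cdot \frac{\int_{\widetilde{\mathcal R}_u} \fgen^{(t)}}{\int_{\widetilde{\mathcal R}_u} f_{R,\Pphi}}.
\]

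Set $\delta_t := \sup_{(r,\pphi)} |f_{R,\Pphi}/\fgen^{(t)} - 1|$. By Theorem~\ref{thm:heavy_vanishing_RE}, $\delta_t\to 0$, so the first factor lies in $[1-\delta_t,1+\delta_t]$ uniformly. For the second factor, integrate the pointwise sandwich $(1-\delta_t)\fgen^{(t)}\le f_{R,\Pphi}\le(1+\delta_t)\fgen^{(t)}$ over $\widetilde{\mathcal R}_u$. This gives
\[
\frac{1}{1+\delta_t}\le \frac{\int_{\widetilde{\mathcal R}_u} \fgen^{(t)}}{\int_{\widetilde{\mathcal R}_u} f_{R,\Pphi}}\le \frac{1}{1-\delta_t},
\]
for $t$ large enough that $\delta_t<1$. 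The bound is uniform in $u$. Multiplying the two bounds, the conditional ratio lies in $[(1-\delta_t)/(1+\delta_t),(1+\delta_t)/(1-\delta_t)]$. Its deviation from one is therefore at most $2\delta_t/(1-\delta_t)\to 0$, and the supremum over $(r,\pphi)\in\widetilde{\mathcal R}_u$ vanishes as $t\to\infty$, uniformly in $u$. Off $\widetilde{\mathcal R}_u$ both conditional densities are zero, so the supremum is read over the support, as in the theorem.

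The role of the inclusion $\mathcal R_u\subset\{\|\zz\|>t\}$ is to ensure that on $\widetilde{\mathcal R}_u$ only the Pareto branch of \eqref{eqn:SSGEN_Ht} is active. The conditional generative density is then exactly $s t^s r^{-(s+1)} f^{(t)}(\pphi)$ restricted to $\widetilde{\mathcal R}_u$ and renormalized. This is what Algorithm~\ref{algo:generate_extremes} samples via rejection onto $\mathcal R_u$. Strictly, the sandwich argument does not need the inclusion, since Theorem~\ref{thm:heavy_vanishing_RE} is already global. The inclusion matters for the interpretation: the approximation uses only the angular law learned at level $t$, even where no data fall in $\mathcal R_u$.

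The main point to check is the quantifier order. The statement says ``for sufficiently large $u$ and $t$'', and the argument must hold when $u\to\infty$ jointly with or faster than $t$. The normalizing-constant step handles this, because the sandwich bound is uniform over all measurable sets and uses no information about the size of $\widetilde{\mathcal R}_u$. The only remaining technical point is measurability of $\mathcal R_u$. I would take it as given, for instance by assuming $L$ is measurable, which holds for the network, ReLU, linear and LP-value losses listed in the paper.
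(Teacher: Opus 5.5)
Your argument is correct. You take the pointwise sandwich $(1-\delta_t)\fgen^{(t)}\le f_{R,\Pphi}\le(1+\delta_t)\fgen^{(t)}$ from Theorem~\ref{thm:heavy_vanishing_RE}, integrate it over $\mathcal R_u$ to control $P_{\mathrm{gen}}(\mathcal R_u)/P(\mathcal R_u)$, and multiply by the pointwise ratio, which gives the $u$-uniform bound $2\delta_t/(1-\delta_t)$. The paper states the corollary without a separate proof, but this is the same reasoning it uses in part~(i) of the proof of Theorem~\ref{thm:est_rare_probs}. Your remark that the inclusion $\mathcal R_u\subset\{\zz:\|\zz\|>t\}$ is not logically needed, because Theorem~\ref{thm:heavy_vanishing_RE} already holds globally, is also correct.
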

Corollary~\ref{cor:gen_extreme_samples} shows that the SSGEN approximation remains accurate after conditioning on a sufficiently deep rare set; Experiment 2 illustrates this by comparing the distribution of a second portfolio loss conditional on a stress event in a first portfolio against a benchmark conditional tail law.
\section{Simulation of Weibullian Distributions} \label{sec:lt_sim}
We now specialize the representation in Lemma~\ref{lem:radius_angle_split} to the Weibull-type regime of Assumption~\ref{assume:lt_data}. The next result shows that, deep in the tail, the conditional density of $(R,\Pphi)$ given $R>t$ again admits an asymptotically separable form, with a nonparametric angular component and an explicit radial component whose shape depends on the angular direction through $\Lambda^\star(\pphi)$. In particular, after a suitable transformation, the radial term is of Gamma type, which again leads to a tractable generative construction.

\begin{proposition}\label{prop:Polar-to-Gamma}
If the density of $\XX$ satisfies Assumption~\ref{assume:lt_data}, then the conditional joint density of $(R,\Pphi)$ given $R>t$ admits the representation
\begin{equation}\label{eqn:Polar-to-Gamma}
    f_{R,\Pphi}(r,\pphi \mid R>t)
    =
    f^{(t)}(\pphi)\,
    \frac{
        r^{d-1}\exp\!\bigl(-r^{\gamma}\Lambda^\star(\pphi)(1+\delta(r,\pphi))\bigr)
    }{
        \int_t^\infty y^{d-1}\exp\!\bigl(-y^{\gamma}\Lambda^\star(\pphi)(1+\delta(y,\pphi))\bigr)\,dy
    }\,
    \mathbf 1(r>t),
\end{equation}
where $\sup_{r>t,\ \pphi\in \mathcal E\cap \mathcal S^{d-1}}
|\delta(r,\pphi)| \to 0 \text{ as } t\to\infty$,
and $f^{(t)}$ is the conditional density of $\Pphi$ given $R>t$.
\end{proposition}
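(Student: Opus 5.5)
Starting from Lemma~\ref{lem:radius_angle_split}, the plan is to write $f_{R,\Pphi}(r,\pphi)=r^{d-1}f_{\XX}(r\pphi)$ and to \emph{define} the error term $\delta$ directly from the data density. For $r>0$ and $\pphi\in\mathcal E\cap\mathcal S^{d-1}$ I set
\[
\delta(r,\pphi) := \frac{-\log f_{\XX}(r\pphi)}{r^{\gamma}\Lambda^\star(\pphi)} - 1,
\]
so that $f_{\XX}(r\pphi)=\exp\bigl(-r^\gamma\Lambda^\star(\pphi)(1+\delta(r,\pphi))\bigr)$ holds exactly. This is legitimate because $\Lambda^\star>0$ and $f_{\XX}>0$ on the cone (at least for large $r$, as Assumption~\ref{assume:lt_data} forces finiteness of the log).

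The joint conditional density given $R>t$ is then $r^{d-1}\exp(-r^\gamma\Lambda^\star(\pphi)(1+\delta(r,\pphi)))\mathbf 1(r>t)/\mathbb P(R>t)$. Integrating in $r$ over $(t,\infty)$ gives the angular marginal
\[
f^{(t)}(\pphi)=\frac{1}{\mathbb P(R>t)}\int_t^\infty y^{d-1}\exp\bigl(-y^\gamma\Lambda^\star(\pphi)(1+\delta(y,\pphi))\bigr)dy .
\]
Multiplying and dividing by this integral yields \eqref{eqn:Polar-to-Gamma} as an exact identity, with the conditional radial density given $\Pphi=\pphi$ in the displayed ratio form. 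So the representation itself is algebra; the real content is the uniform smallness of $\delta$.

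For the uniformity claim I would note that $\mathcal E\cap\mathcal S^{d-1}$ is compact and bounded away from $\mv 0$ (assuming, as implicitly in the statement, that it is a compact subset of $\mathcal E\setminus\{\mv 0\}$ where the convergence holds). Put $n=r$ and $\zz=\pphi$ in Assumption~\ref{assume:lt_data}. Uniform convergence over this compact set gives $\sup_{\pphi}\bigl|{-\log f_{\XX}(r\pphi)}/r^\gamma-\Lambda^\star(\pphi)\bigr|\to0$ as $r\to\infty$ along the continuous parameter. Dividing by $\Lambda^\star(\pphi)$ requires $\inf_{\pphi}\Lambda^\star(\pphi)>0$. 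I would obtain this from continuity of $\Lambda^\star$, which follows as a uniform limit of the continuous functions $\zz\mapsto -\log f_{\XX}(n\zz)/n^\gamma$ if $f_{\XX}$ is continuous and positive, together with compactness and $\Lambda^\star>0$. Then $\sup_{r>t,\pphi}|\delta(r,\pphi)|\le \sup_{r>t}\sup_\pphi|\cdot|/\inf\Lambda^\star\to0$.

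The main obstacle is this last step. It needs the limit in \eqref{eqn:lt_data} to hold along real $n\to\infty$, not just integers, and it needs a positive lower bound on $\Lambda^\star$ over the sphere. If only integer $n$ is intended, I would handle it by writing $r=n\cdot(r/n)$ with $r/n\in[1,2)$ and applying uniformity on the compact annulus $\{\zz: \|\zz\|\in[1,2],\ \zz/\|\zz\|\in\mathcal S^{d-1}\cap\mathcal E\}$. The homogeneity $\Lambda^\star(c\zz)=c^\gamma\Lambda^\star(\zz)$, which comes from comparing limits along $n$ and $cn$, converts this back to the sphere. Lower semicontinuity or continuity of $\Lambda^\star$ then supplies the infimum bound, and if needed I would state it as a mild additional regularity condition.
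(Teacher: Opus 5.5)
Your proposal is correct and follows essentially the paper's route: the paper likewise writes $\log f_{\XX}(r\pphi)=-r^{\gamma}\Lambda^\star(\pphi)(1+\delta(r,\pphi))$ and factors the conditional density through $f^{(t)}$, which is exactly your marginalization identity. The paper simply asserts $\sup_{r>t,\pphi}|\delta(r,\pphi)|\to 0$ at this point, so your remarks that one needs convergence along real $r$ and $\inf_{\pphi}\Lambda^\star(\pphi)>0$ are well taken; the paper supplies the latter only later, in the proof of Theorem~\ref{thm:light_vanishing_LE}, via continuity of $\Lambda^\star$ (citing \cite{RockafellarWets1998}), which rests on the same kind of implicit regularity you flag.
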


Proposition~\ref{prop:Polar-to-Gamma} shows that, beyond a large threshold $t$, the conditional density of $(R,\Pphi)$ is asymptotically separable into the angular density $f^{(t)}(\pphi)$ and an explicit direction-dependent radial component governed by $\Lambda^\star(\pphi)$. This yields the spliced approximation 
\begin{equation}\label{eqn:SSGEN_lt}
    \fgen^{(t)}(r,\pphi)
    =
    f_{R,\Pphi}(r,\pphi)\,\mathbf{1}(r<t)
    +
    p_t\, f^{(t)}(\pphi)\,
    \frac{
        r^{d-1}\exp\!\bigl(-r^{\gamma}\Lambda^\star(\pphi)\bigr)
    }{
        \int_t^\infty y^{d-1}\exp\!\bigl(-y^{\gamma}\Lambda^\star(\pphi)\bigr)\,dy
    }
    \,\mathbf{1}(r>t),
\end{equation}
where $p_t:=\mathbb P(R>t)$.
If the threshold $t$ is chosen so that the conditional angular density $f^{(t)}(\pphi)$ can be estimated from the available exceedances, then \eqref{eqn:SSGEN_lt} provides a candidate generative approximation to the tail density. The next theorem shows that this approximation is asymptotically exact in uniform log-relative error, thereby reducing approximation of the full tail density to the problem of learning the angular component alone.

\begin{theorem}\label{thm:light_vanishing_LE}
Assume that Assumption~\ref{assume:lt_data} holds, and let $\fgen^{(t)}$ be defined by \eqref{eqn:SSGEN_lt}. Then, as $t\to\infty$,
\begin{equation}\label{eqn:light_vanishing_LE}
    \sup_{(r,\pphi)}
    \left|
    \frac{\log f_{R,\Pphi}(r,\pphi)}{\log \fgen^{(t)}(r,\pphi)} - 1
    \right|
    \to 0.
\end{equation}
\end{theorem}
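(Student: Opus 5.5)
On $\{r<t\}$ the two densities coincide, so the ratio of logarithms is exactly $1$ there. The supremum in \eqref{eqn:light_vanishing_LE} is therefore effectively over $r>t$ and $\pphi\in\mathcal E\cap\mathcal S^{d-1}$. On this region I would compare the two tail pieces using Proposition~\ref{prop:Polar-to-Gamma}. Since $f_{R,\Pphi}(r,\pphi)=p_t f_{R,\Pphi}(r,\pphi\mid R>t)$ for $r>t$, taking logarithms of \eqref{eqn:Polar-to-Gamma} and \eqref{eqn:SSGEN_lt} gives
\begin{equation*}
\log f_{R,\Pphi}(r,\pphi)-\log \fgen^{(t)}(r,\pphi)
= -r^\gamma\Lambda^\star(\pphi)\delta(r,\pphi) - \log\frac{I_\delta(t,\pphi)}{I_0(t,\pphi)},
\end{equation*}
where $I_\delta(t,\pphi)=\int_t^\infty y^{d-1}\exp(-y^\gamma\Lambda^\star(\pphi)(1+\delta(y,\pphi)))\,dy$ and $I_0$ is the same integral with $\delta\equiv0$. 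The common factors $p_t$, $f^{(t)}(\pphi)$ and $r^{d-1}$ cancel. The goal is to show that this difference is $o(|\log \fgen^{(t)}(r,\pphi)|)$ uniformly.

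\textbf{Step 1: the denominator.} I would show that $|\log \fgen^{(t)}(r,\pphi)|\ge c\,r^\gamma$ for $r>t$ once $t$ is large. Write
\begin{equation*}
-\log\fgen^{(t)} = -\log p_t - \log f^{(t)}(\pphi) - (d-1)\log r + r^\gamma\Lambda^\star(\pphi) + \log I_0(t,\pphi).
\end{equation*}
Assumption~\ref{assume:lt_data} holds uniformly on the compact set $\mathcal E\cap\mathcal S^{d-1}$, and I take $\Lambda^\star$ continuous and positive there, hence bounded between $\lambda_-$ and $\lambda_+$. Then $\log p_t=-t^\gamma\min\Lambda^\star(1+o(1))$ by a Laplace-type argument. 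A standard Gamma-tail estimate gives $\log I_0(t,\pphi)=-t^\gamma\Lambda^\star(\pphi)+O(\log t)$ uniformly. The term $\log f^{(t)}(\pphi)$ is exactly $\log p_t^{-1}+\log\int_t^\infty r^{d-1}f_{\XX}(r\pphi)dr = \log p_t^{-1}+\log I_\delta$ up to normalization, which makes the bookkeeping consistent. Combining these, $-\log\fgen^{(t)}= r^\gamma\Lambda^\star(\pphi)+O(\log r)$ plus terms that cancel. In particular it is at least of order $r^\gamma\ge t^\gamma\to\infty$.

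\textbf{Step 2: the numerator.} The first term is bounded by $\lambda_+ r^\gamma\sup|\delta|=o(r^\gamma)$ uniformly. For the ratio of integrals, write $|\delta|\le\eta_t\to0$; then
\begin{equation*}
I_0\bigl(\Lambda^\star(1+\eta_t)\bigr)\le I_\delta\le I_0\bigl(\Lambda^\star(1-\eta_t)\bigr).
\end{equation*}
By the Gamma-tail estimate, each bound has logarithm $-t^\gamma\Lambda^\star(1\pm\eta_t)+O(\log t)$. Hence $|\log(I_\delta/I_0)|\le \lambda_+\eta_t t^\gamma+O(\log t)=o(t^\gamma)\le o(r^\gamma)$. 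Dividing the numerator by the lower bound from Step 1 then yields \eqref{eqn:light_vanishing_LE}.

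\textbf{Main obstacle.} The delicate step is Step 1: the lower bound on $|\log\fgen^{(t)}|$ must hold uniformly in $\pphi$ and for $r$ just above $t$. Near $r=t$, the terms $-\log p_t$, $-\log f^{(t)}(\pphi)$ and $\log I_0$ are all of order $t^\gamma$, so I must check that they do not cancel to something $o(t^\gamma)$. I would handle this by writing $\log f^{(t)}(\pphi)=\log I_\delta(t,\pphi)-\log p_t$ explicitly, which is valid by Lemma~\ref{lem:radius_angle_split} and the definition of $\delta$. The tail log-density then reduces to
\begin{equation*}
-(d-1)\log r+r^\gamma\Lambda^\star(\pphi)+\log(I_0/I_\delta)\ge \lambda_- r^\gamma(1-o(1)),
\end{equation*}
which is the required bound. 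The uniform Gamma-tail estimate for $I_0$ over $\Lambda^\star\in[\lambda_-,\lambda_+]$ is routine via the substitution $y^\gamma\Lambda=v$.
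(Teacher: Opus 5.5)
Your proposal is correct and follows the paper's proof essentially step for step. It uses the same log-difference identity, the same sandwich $J_t\bigl((1+\eta_t)\Lambda^\star(\pphi)\bigr)\le I_\delta(t,\pphi)\le J_t\bigl((1-\eta_t)\Lambda^\star(\pphi)\bigr)$ with $J_t(a)=\int_t^\infty y^{d-1}e^{-ay^\gamma}\,dy$ handled by uniform incomplete-gamma asymptotics, and the same denominator bound $-\log\fgen^{(t)}(r,\pphi)\ge\lambda_-r^\gamma(1-o(1))$, derived from $p_tf^{(t)}(\pphi)=I_\delta(t,\pphi)$, which the paper uses in the one-sided form $p_tf^{(t)}(\pphi)\le J_t\bigl((1-\bar\delta_t)\Lambda^\star(\pphi)\bigr)$.

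That identity is exact (no ``up to normalization'' needed), so the Laplace estimate for $\log p_t$ in your Step~1 is superfluous. Also, the intermediate claim $-\log\fgen^{(t)}=r^\gamma\Lambda^\star(\pphi)+O(\log r)$ should carry the extra $o(t^\gamma)$ term $\log(I_0/I_\delta)$, exactly as your final paragraph has it.
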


Theorem~\ref{thm:light_vanishing_LE} shows that the statistical task again reduces to estimating the angular density $f^{(t)}$. As in the heavy-tailed case, this is a compact-domain learning problem, while the radial behavior is supplied analytically through the Weibull-type tail asymptotics. Standard DGMs can therefore be used to learn the angular law, with $\Lambda^\star$ governing the direction-dependent radial decay. Algorithm~\ref{algo:generate_extremes_lt} implements this decomposition in practice.

\begin{algorithm}[h]
  \caption{$\SSGEN$-LT: Sampling from the conditional tail model for Weibull-type data}
  \label{algo:generate_extremes_lt}
  \KwIn{Data samples $\XX_1,\ldots,\XX_n$, threshold $t$, DGM mechanism, estimate of $\Lambda^\star$}
  
\noindent\textbf{I) Training Step:}
Train a DGM on the angular exceedance sample $\Theta_{\mathcal Y}$ defined as in Algorithm~\ref{algo:generate_extremes}.

\noindent\textbf{II) Generation Step:}
\begin{enumerate}
\item Sample $\Pphi^\ast$ from the trained DGM.
\item Sample $Z\sim \Gamma(d/\gamma,\Lambda^\star(\Pphi^\ast))$ and set $R=Z^{1/\gamma}$.
\item Repeat until $R>t$, and return $G_t=R\Pphi^\ast$.
\end{enumerate}
\end{algorithm}

\begin{lemma}\label{lem:lt_correct}
Assume that Step~(i) of Algorithm~\ref{algo:generate_extremes_lt} produces a sample $\Pphi^*$ from the angular density $f^{(t)}$, and that $\Lambda^\star$ is known exactly. Then the output $G_t$ of Algorithm~\ref{algo:generate_extremes_lt} has density
\[
(r,\pphi)\mapsto
f^{(t)}(\pphi)\,
\frac{
r^{d-1}\exp\!\bigl(-r^\gamma \Lambda^\star(\pphi)\bigr)
}{
\int_t^\infty y^{d-1}\exp\!\bigl(-y^\gamma \Lambda^\star(\pphi)\bigr)\,dy
}
\,\mathbf 1(r>t),
\]
which is precisely the conditional density induced by the tail component of $\fgen^{(t)}$ on $\{r>t\}$.
\end{lemma}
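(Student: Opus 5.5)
We compute the law of the output $G_t=R\Pphi^*$ directly, conditioning on the angular draw and then applying rejection sampling to the radius. The proof has three short steps.

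\textbf{Step 1 (conditional density of $R$ before rejection).} Fix $\Pphi^*=\pphi$, which has density $f^{(t)}$ by assumption. Set $\lambda=\Lambda^\star(\pphi)>0$. The variable $Z\sim\Gamma(d/\gamma,\lambda)$ has density
\[
\frac{\lambda^{d/\gamma}}{\Gamma(d/\gamma)}\, z^{d/\gamma-1}e^{-\lambda z}, \qquad z>0,
\]
in the rate parametrisation. We take $R=Z^{1/\gamma}$, so $z=r^\gamma$ and $dz=\gamma r^{\gamma-1}\,dr$. The change of variables gives
\[
g_\pphi(r)=\frac{\gamma\lambda^{d/\gamma}}{\Gamma(d/\gamma)}\, r^{\gamma(d/\gamma-1)}\,r^{\gamma-1}e^{-\lambda r^\gamma}
=\frac{\gamma\lambda^{d/\gamma}}{\Gamma(d/\gamma)}\, r^{d-1}e^{-\lambda r^\gamma}, \qquad r>0.
\]
So, up to a constant $c(\pphi)$ depending only on $\pphi$, this density is $r^{d-1}\exp(-r^\gamma\Lambda^\star(\pphi))$. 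The shape parameter $d/\gamma$ is exactly what makes the exponent of $r$ equal to $d-1$.

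\textbf{Step 2 (effect of the rejection loop).} Step~3 redraws only the radius, keeping $\Pphi^*$ fixed, until $R>t$. This is the reading needed for the claimed product form. Conditionally on $\pphi$, the accepted radius therefore has density
\[
\frac{g_\pphi(r)\mathbf 1(r>t)}{\int_t^\infty g_\pphi(y)\,dy}.
\]
The loop terminates almost surely because the acceptance probability $\int_t^\infty g_\pphi>0$ for every $\pphi$. The constant $c(\pphi)$ cancels between numerator and denominator, which leaves the normalized radial factor in the statement.

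\textbf{Step 3 (assemble the joint law).} Multiply the accepted-radius density by the density $f^{(t)}(\pphi)$ of $\Pphi^*$. This yields the displayed joint density of $(R,\Pphi^*)$, which is the polar representation of $G_t$.

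To compare with the tail component of $\fgen^{(t)}$ in \eqref{eqn:SSGEN_lt}, restrict to $\{r>t\}$ and divide by its mass. Integrating the tail term over $r>t$ gives, for each $\pphi$, the value $p_t f^{(t)}(\pphi)$, because the radial factor is a probability density on $(t,\infty)$. Integrating over $\pphi$ then gives total mass $p_t$. Dividing by $p_t$ recovers exactly the stated density.

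\textbf{Main subtlety.} The only point needing care is the interpretation of the rejection step. If the angle were also redrawn on each rejection, the angular marginal would be reweighted by the acceptance probability $\int_t^\infty g_\pphi(y)\,dy$ and would no longer equal $f^{(t)}$. The proof therefore fixes $\Pphi^*$ and redraws only $Z$. The remaining checks are routine: the rate convention for $\Gamma$ and the Jacobian $\gamma r^{\gamma-1}$.
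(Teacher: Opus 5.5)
Your proof is correct and follows essentially the same route as the paper: condition on $\Pphi^*=\pphi$, change variables $R=Z^{1/\gamma}$ to get a radial density proportional to $r^{d-1}\exp(-r^\gamma\Lambda^\star(\pphi))$, truncate it to $(t,\infty)$ through the rejection loop, and multiply by $f^{(t)}(\pphi)$. Two of your points are useful additions the paper leaves unstated: you say explicitly that only the radius is redrawn in the loop, which the paper assumes implicitly, and you check that the tail component of $\fgen^{(t)}$ has mass $p_t$.
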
 

Lemma~\ref{lem:lt_correct} verifies that Algorithm~\ref{algo:generate_extremes_lt} samples exactly from the tail component of $\fgen^{(t)}$ when $\Lambda^\star$ is known. One possible fully data-driven light-tailed implementation estimates
\(\Lambda^\star\) through local angular smoothing. For directions near \(\phi\), one may fit a
truncated Weibull/Gamma-type radial tail rate using the exceedances \((R_i,\Phi_i)\) with \(R_i>t\) and \(\Phi_i\) in a local angular neighborhood. The fitted rates can then be smoothed across nearby directions and used as a
plug-in estimate of \(\Lambda^\star(\phi)\) in Algorithm~2. We defer tuning and consistency analysis to the full version.

\noindent\textbf{Application to Estimating Tail Probabilities:} As an application, we consider rare-event probability estimation. The next result shows that the probability assigned to $\mathcal R_u$ by the SSGEN model is asymptotically accurate as the event becomes rarer.
\begin{theorem}\label{thm:est_rare_probs}
Let $P_{\mathrm{gen}}(\mathcal R_u)$ denote the probability of $\mathcal R_u$ under the SSGEN generative approximation. Consider a regime in which $u\to\infty$ and $t=t(u)\to\infty$, with $t(u) = o(u)$. Then:
\begin{enumerate}
    \item[(i)] If Assumption~\ref{assume:ht_data} holds and SSGEN is defined through Algorithm~\ref{algo:generate_extremes}, then
    \[
        \left|
        \frac{P_{\mathrm{gen}}(\mathcal R_u)}{P(\mathcal R_u)} - 1
        \right|
        \to 0.
    \]

    \item[(ii)] Suppose Assumption~\ref{assume:lt_data} holds and SSGEN is defined through Algorithm~\ref{algo:generate_extremes_lt}. Further, suppose that the loss $L(\cdot)$ satisfies $n^{-1}L(n\zz_n) \to L^\star(\zz)$ for some positive $L^\star$ whenever $\zz_n\to\zz$. Then,
    \[
        \left|
        \frac{\log P_{\mathrm{gen}}(\mathcal R_u)}{\log P(\mathcal R_u)} - 1
        \right|
        \to 0.
    \]
\end{enumerate}
\end{theorem}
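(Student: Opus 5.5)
The plan is to express both probabilities as integrals of the respective densities over $\mathcal R_u$ in polar coordinates, and then transfer the uniform density approximations of Theorems~\ref{thm:heavy_vanishing_RE} and~\ref{thm:light_vanishing_LE} to the level of integrals. The first step is to check that $\mathcal R_u$ lies entirely in the spliced tail region $\{\|\zz\|>t(u)\}$ for large $u$. For (i) I will use the standing hypothesis of Corollary~\ref{cor:gen_extreme_samples}, namely that $\mathcal R_u\subset\{\zz:\|\zz\|>t\}$ for large $u,t$; because $t(u)=o(u)$, this is the natural regime in which the threshold sits below the scale of the rare set. For (ii), the condition $n^{-1}L(n\zz_n)\to L^\star(\zz)$ with $L^\star$ positive and continuous on the compact angular set gives $L(\zz)\le C\|\zz\|$ for large $\|\zz\|$. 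Hence $L(\zz)>u$ forces $\|\zz\|\ge u/C\gg t(u)$, up to a bounded region that is irrelevant once $u$ is large. On $\mathcal R_u$ the generative density is therefore exactly the tail component, and the cutoff $\mathbf 1(r<t)$ plays no role.

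For part (i), write
\[
P(\mathcal R_u)=\int_{\mathcal R_u} f_{R,\Pphi}\,dr\,d\pphi,\qquad
P_{\mathrm{gen}}(\mathcal R_u)=\int_{\mathcal R_u}\fgen^{(t)}\,dr\,d\pphi .
\]
Theorem~\ref{thm:heavy_vanishing_RE} gives $f_{R,\Pphi}=\fgen^{(t)}(1+\eta_t)$ with $\sup|\eta_t|\to0$. It follows that $|P(\mathcal R_u)-P_{\mathrm{gen}}(\mathcal R_u)|\le \sup|\eta_t|\,P_{\mathrm{gen}}(\mathcal R_u)$, and the ratio tends to $1$ because $t(u)\to\infty$. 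The one thing to confirm is that Algorithm~\ref{algo:generate_extremes} samples from the tail component of \eqref{eqn:SSGEN_Ht} normalized by $p_t$. The independent draws of $\Pphi^*\sim f^{(t)}$ and of the Pareto radius give exactly the product density, so this holds.

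For part (ii), a pointwise ratio bound on the logarithms is not enough, because integrating densities with only log-relative accuracy does not directly control the log of the integrals. I will proceed by Laplace-type (large deviations) asymptotics instead. By Assumption~\ref{assume:lt_data} and the uniform $\delta$ of Proposition~\ref{prop:Polar-to-Gamma}, substituting $\zz=u\mv w$ gives $-\log f_{\XX}(u\mv w)=u^\gamma\Lambda^\star(\mv w)(1+o(1))$ uniformly on compacts. The scaled sets $u^{-1}\mathcal R_u$ converge to $\{L^\star>1\}$, which is bounded away from the origin, so I expect
\[
\log P(\mathcal R_u)\sim -u^\gamma \inf_{L^\star(\mv w)\ge1}\Lambda^\star(\mv w).
\]
The same computation applies to $P_{\mathrm{gen}}$. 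Its tail density is $p_t f^{(t)}(\pphi)r^{d-1}e^{-r^\gamma\Lambda^\star(\pphi)}/c_t(\pphi)$, and the prefactors $p_t$, $f^{(t)}$ and $c_t(\pphi)$ contribute only $O(t^\gamma)=o(u^\gamma)$ to the logarithm, by Theorem~\ref{thm:light_vanishing_LE}/Proposition~\ref{prop:Polar-to-Gamma} and $t=o(u)$. Its exponent is exactly $r^\gamma\Lambda^\star(\pphi)$, which gives the same rate function. Dividing the two asymptotics yields the claim.

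The main obstacle is the Laplace step in (ii), which needs three ingredients. The upper bound needs tail control of the integral over $\|\mv w\|\to\infty$; I would handle this via the uniform exponent, using homogeneity of $\Lambda^\star$ of degree $\gamma$, which follows from the assumption. The lower bound needs a small ball around a near-minimizer inside $\{L^\star>1\}$, which requires continuity of $\Lambda^\star$ and $L^\star$ and that $\{L^\star>1\}$ is the closure of its interior. The angular density $f^{(t)}$ must also be bounded below by $e^{-o(u^\gamma)}$ near the minimizing direction. I would try to extract that bound from $f^{(t)}(\pphi)=p_t^{-1}\int_t^\infty r^{d-1}f_{\XX}(r\pphi)\,dr$ together with the assumption, which gives $\log f^{(t)}=O(t^\gamma)$.
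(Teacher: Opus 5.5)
Your proposal is correct. Part (i) is the paper's argument. You do not need the inclusion $\mathcal R_u\subset\{\|\zz\|>t\}$ borrowed from Corollary~\ref{cor:gen_extreme_samples}, which is not a hypothesis of this theorem, because the relative-error bound of Theorem~\ref{thm:heavy_vanishing_RE} is global.

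Part (ii) takes a genuinely different route. The paper does not redo the Laplace analysis for $P_{\mathrm{gen}}$. After showing $\|\zz\|\ge cu$ on $\mathcal R_u$, it reads Theorem~\ref{thm:light_vanishing_LE} as the pointwise sandwich $f_{\XX}^{1+\varepsilon}\le\fgen^{(t)}\le f_{\XX}^{1-\varepsilon}$ on $\mathcal R_u$. It then proves one Laplace lemma, $\log\int_{\mathcal R_u}f_{\XX}^{a}=-a\Lambda^\star_u(1+o(1))$ for each fixed $a>0$, and applies it with $a=1$ and $a=1\pm\varepsilon$. So your remark that a pointwise log-ratio bound is ``not enough'' is too pessimistic: paired with Laplace asymptotics for powers of $f_{\XX}$, it is exactly enough.

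You instead run the Laplace argument twice, once for $f_{\XX}$ and once for the explicit tail component of $\fgen^{(t)}$. That component has exponent exactly $\Lambda^\star(r\pphi)=r^\gamma\Lambda^\star(\pphi)$, and its prefactor is harmless. In fact $p_tf^{(t)}(\pphi)=\int_t^\infty f_{R,\Pphi}(y,\pphi)\,dy$, so the prefactor $p_tf^{(t)}(\pphi)/c_t(\pphi)$ is $e^{o(t^\gamma)}$ uniformly, by the incomplete-gamma comparison in the proof of Theorem~\ref{thm:light_vanishing_LE}. Your cruder $e^{O(t^\gamma)}$ bound already suffices once $t=o(u)$.

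What each route buys: yours is valid and makes the separation of scales $t^\gamma=o(u^\gamma)$ explicit, but it is tied to the specific form of $\fgen^{(t)}$. The paper's reduction works for any approximation with vanishing uniform log-relative error on $\mathcal R_u$ and needs only one Laplace lemma.

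The Laplace core is the same in both: truncation at radius of order $u$ via $\Lambda^\star(\zz)\ge\lambda_-\|\zz\|^\gamma$, and a small ball around a near-minimizer for the lower bound. The regularity you flag for that lower bound is automatic here. By homogeneity, $(1+\delta)\zz^\star$ lies in the open set $\{L^\star>1\}$ at a cost of a factor $(1+\delta)^\gamma$ in $\Lambda^\star$, which is how the paper handles it. What is actually needed is only that $\mathcal E$ be full-dimensional near that point.
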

To interpret this result, suppose $n$ observations are available and that
$p_u:=\mathbb P(\mathcal R_u)=n^{-\kappa}$ for some $\kappa>1$. Then the expected number of observations in $\mathcal R_u$ is $n p_u=n^{1-\kappa}\to 0$, so direct learning in this region is infeasible. If, however, the intermediate threshold $t$ is chosen so that $p_t:=\mathbb P(R>t)=n^{-q}$ for some $q\in(0,1)$, then the number of exceedances above $t$ is of order $n^{1-q}\to\infty$, and the angular density $f^{(t)}$ remains learnable from the data (see, e.g., \cite{liang2021well} for learning rates). SS-GEN exploits this separation of scales by learning the angular law at threshold $t$ and combining it with the explicit radial tail model to extrapolate to the much rarer region $\mathcal R_u$. As a result, it supports accurate estimation of tail probabilities and related risk measures in regions that are effectively unobserved in the original dataset.
\section{Numerical Experiments}\label{sec:numericals}
We now present numerical illustrations of the proposed methodology. In this version of the paper, we restrict attention to the heavy-tailed setting owing to space constraints; a broader empirical study, including light-tailed examples, is deferred to a full version of the paper.
As a heavy-tailed test bed, we consider a $d=10$ dimensional mixture model
\[
\XX = B\XX_1 + (1-B)\XX_2,
\]
where $B\sim \mathrm{Bernoulli}(1/2)$, $\XX_1=\mv U \Lambda R$ with $\mv U$ uniformly distributed on the unit sphere and $R$ Pareto with tail index $s=2.5$, and $\XX_2$ is multivariate Student-$t$ with $\nu=3$ degrees of freedom. This construction combines heterogeneous heavy-tailed behavior with nontrivial dependence, providing a somewhat representative setting to test the performance of SSGEN. For the DGM in Algorithm~\ref{algo:generate_extremes}, we use a standard GAN trained on the angular exceedance sample $\Theta_{\mathcal Y}$, with latent vectors drawn uniformly from the unit sphere. The angular generator is a fully connected network with dense layers, Leaky ReLU activations, batch normalization, and a \texttt{tanh} output layer. Since \texttt{tanh} bounds the output but does not enforce unit norm, each generated angular vector is normalized before radial recombination. The projected Wasserstein diagnostic is computed on these normalized directions, matching the angular samples used by SS-GEN.

\noindent\textbf{Experiment 1: Tail Risk Estimation.}
As a first illustration of the utility of SS-GEN, we consider estimation of the value at risk (VaR) and conditional value at risk (CVaR) of the portfolio loss $L(\XX)=\ww^\top \XX$, where
\begin{equation}\label{eqn:var_cvar}
    \VaR_{\alpha}(L)
    =
    \inf\{u: P(L>u)\le 1-\alpha\},
    \qquad
    \CVaR_{\alpha}(L)
    =
    E\!\left[L \mid L\ge \VaR_{\alpha}(L)\right].
\end{equation}

Since both quantities are determined by the tail of $L(\XX)$, we begin by estimating the exceedance probability $P(L>u)$ by combining the empirical contribution below the threshold $t$ with Monte Carlo samples from the SS-GEN tail model above $t$. Specifically, we set
\[
    \hat F_n(u)
    =
    \frac{1}{n}\sum_{i=1}^n \mv{1}\!\left(L(\XX_i)>u,\ \|\XX_i\|\le t\right)
    +
    \hat p_t^{(n)}\frac{1}{M}\sum_{i=1}^M \mv{1}\!\left(L(R_{t,i}\Pphi_i^\star)>u\right),
\]
where $\{(R_{t,i},\Pphi_i^\star)\}_{i=1}^M$ are generated using Algorithm~\ref{algo:generate_extremes}: the directions $\Pphi_i^\star$ are sampled from the learned angular law above threshold $t$, and the radii $R_{t,i}$ are sampled from the Pareto exceedance distribution on $(t,\infty)$. Here $\hat p_t^{(n)}$ denotes the empirical estimate of $p_t=\mathbb P(R>t)$. The tail index $s$ is estimated from the exceedance set $\mathcal Y$ using the Hill estimator, since the tail cdf of $R$ is regularly varying under Assumption~\ref{assume:ht_data}:
\[
    \hat s
    =
    \left(
    \frac{1}{|\mathcal Y|}
    \sum_{j\in\mathcal Y}
    \log\!\left(\frac{R_j}{t}\right)
    \right)^{-1}.
\]
The VaR and CVaR estimates are then obtained from the estimated tail probabilities $\hat F_n$.

We take $n=1000$ observations and choose $t$ as the empirical $90$th percentile of $\{\|\XX_i\|\}_{i=1}^n$, yielding $|\mathcal Y|=100$ exceedances. We generate $M=10000$ tail samples to evaluate $\hat F_n(u)$ and repeat the resulting Monte Carlo estimation step $1000$ times for the downstream VaR/CVaR calculation. Since training the DGM in Step~I of Algorithm~\ref{algo:generate_extremes} takes approximately $20$ minutes, the generator is trained once on an independent sample of size $n=1000$. Accordingly, the reported variability reflects Monte Carlo uncertainty conditional on the trained generator, rather than end-to-end retraining variability. Summary statistics are reported in Table~\ref{tab:var_cvar_results}, while Figure~\ref{fig:var_cvar_boxplot} shows the distribution of estimates across replications. To examine sensitivity to $|\mathcal Y|$, we separately take $n=1000$ observations and vary the percentile threshold to obtain $|\mathcal Y|\in\{50,100,150,200\}$, repeating the CVaR estimation at $\alpha=0.999$ over $1000$ replications for each setting (right panel of Figure~\ref{fig:var_cvar_boxplot}).

\begin{table}[htbp]
\centering
\caption{Summary of VaR and CVaR estimates over $1{,}000$ replications. ``Med.'' denotes the median estimate, ``95\% CI'' the empirical $95\%$ interval across replications, and ``\% Err.'' the relative error of the median estimate with respect to the benchmark value.}
\label{tab:var_cvar_results}
\small
\begin{tabular}{c r r r r @{\hspace{1em}} r r r r}
\toprule
& \multicolumn{4}{c}{VaR} & \multicolumn{4}{c}{CVaR} \\
\cmidrule(r){2-5} \cmidrule(l){6-9}
$\alpha$ & True & Med. & 95\% CI & \% Err. & True & Med. & 95\% CI & \% Err. \\
\midrule
0.95 & 2.48 & 2.55 & (2.33, 2.82) & 3.0 & 4.33 & 4.47 & (3.81, 5.24) & 3.2 \\
0.98 & 3.85 & 3.82 & (3.44, 4.23) & 0.9 & 6.26 & 6.48 & (5.23, 8.12) & 3.5 \\
0.99 & 5.11 & 5.08 & (4.35, 5.83) & 0.5 & 8.13 & 8.59 & (6.51, 11.29) & 5.7 \\
0.995 & 6.70 & 6.74 & (5.49, 8.15) & 0.7 & 10.48 & 11.38 & (8.07, 16.01) & 8.6 \\
0.999 & 11.92 & 13.03 & (9.11, 18.09) & 9.3 & 18.66 & 21.85 & (12.81, 37.27) & 17.1 \\
\bottomrule
\end{tabular}
\end{table}

\begin{figure}[htbp]
    \centering
    \includegraphics[width=0.9\linewidth]{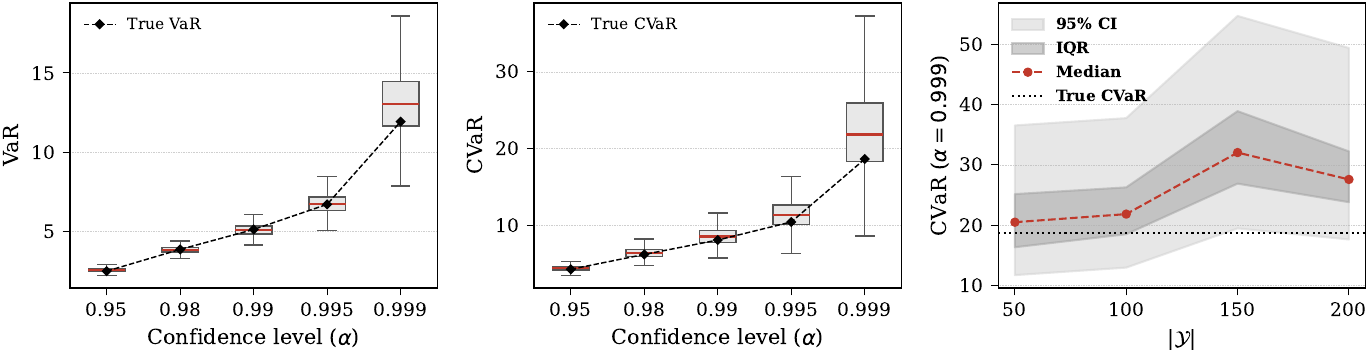}
    \caption{Left: distribution of VaR estimates across confidence levels. Middle: distribution of CVaR estimates across confidence levels. Right: CVaR estimates at $\alpha = 0.999$ as a function of $|\mathcal{Y}|$ over $1000$ replications.}
    \label{fig:var_cvar_boxplot}
\end{figure}
Our results show that SS-GEN estimates VaR and CVaR accurately across the range of confidence levels considered. Performance is strongest at moderate tail levels, with noticeable but still controlled degradation only at the most extreme level $\alpha=0.999$. Further, the performance of our procedure is robust over a range of intermediate thresholds considered. Preliminary $d=20$ experiments show the expected threshold tradeoff. At $\alpha=0.95$, VaR (resp. CVaR) relative errors were $1.15\%$ (resp. $6.71\%$) with the 80th-percentile threshold, versus $11.78\%$ (resp. $23.26\%$) with the 90th-percentile threshold. The less aggressive threshold retains more angular exceedances and gives more stable training; the point is threshold sensitivity, not dimension-free scalability.

\noindent\textbf{Experiment 2: Conditional Tail Distribution Quality.} We next evaluate how well SS-GEN captures the conditional tail law
\[
P\!\left(\XX\in\cdot \mid L_i(\XX)\ge \VaR_{\alpha}(L_i)\right).
\]
Unlike Experiment~1, which focuses on scalar tail-risk functionals, this experiment tests whether SS-GEN preserves the effect of stress in one portfolio on the distribution of an overlapping portfolio. To this end, we condition on exceedance of one portfolio loss and compare the induced distribution of a second portfolio loss under the generated and benchmark tail samples. Let
\[
L_i(\XX)=\ww_i^\top \XX,\qquad L_j(\XX)=\ww_j^\top \XX,
\]
where $\ww_i$ and $\ww_j$ have partially overlapping supports. For a fixed confidence level $\alpha$, let $\VaR_{\alpha}(L_i)$ be a high-accuracy benchmark estimate computed from a large independent Monte Carlo sample from the true mixture model. We define the benchmark conditional tail sample as
\[
\mathcal T_{\mathrm{ref}}
=
\left\{
\XX_k: L_i(\XX_k)\ge \VaR_{\alpha}(L_i),\; k=1,\dots,N_{\mathrm{ref}}
\right\},
\]
where $\{\XX_k\}_{k=1}^{N_{\mathrm{ref}}}$ is a benchmark sample of size $10^6$. 
We compare three methods for generating samples satisfying $L_i(\tilde\XX)\ge \VaR_{\alpha}(L_i)$:
\begin{enumerate}
    \item \textbf{SS-GEN.} Generate $M=10{,}000$ tail samples using Algorithm~\ref{algo:generate_extremes} and retain those satisfying the stress condition.
    \item \textbf{Fitted multivariate $t$.} Fit a multivariate $t$ distribution to the $n$ observations, draw samples from the fitted model, and retain those satisfying the stress condition.
    \item \textbf{Na\"ive resampling.} Draw an independent sample of size $n$ from the true mixture model and retain only observations satisfying the stress condition. This baseline reflects direct empirical tail sampling without extrapolation.
\end{enumerate}


These parametric and empirical baselines are chosen to isolate the effect of radial-angular extrapolation. We omit tail-specific DGM baselines because a fair comparison would require architecture-specific tuning or modified priors; the present comparison instead contrasts SS-GEN with the two most direct alternatives: parametric heavy-tail fitting and empirical tail resampling. For each method, the retained tail samples are projected onto the second portfolio axis via
$
L_j(\tilde \XX)=\ww_j^\top \tilde \XX$, and compared against the benchmark projection
$L_j(\XX_k),$ for $ \XX_k\in \mathcal T_{\mathrm{ref}}$. 
We quantify discrepancy using the two-sample Kolmogorov--Smirnov distance
$D_{n,M}= \sup_x
\left| \hat F_{\mathrm{ref}}(x)-\hat F_{\mathrm{gen}}(x)
\right|$,
where $\hat F_{\mathrm{ref}}$ and $\hat F_{\mathrm{gen}}$ denote the empirical distribution functions of the projected benchmark and generated tail samples, respectively.

We consider two experimental setups: Setup A with  $n=500$ observations, and  Setup B with  $n=1000$ observations. In each case $t$ is chosen as the $90$th percentile of the samples $\|\XX_i\|$. 
In both setups, we consider confidence levels $\alpha\in\{0.95,0.99\}$ and repeat the entire procedure $1000$ times.

\begin{table}[ht]
\centering
\caption{K-S distance between the projected conditional tail law and the benchmark reference over $1000$ repetitions. Smaller values indicate closer agreement. Entries report mean (standard deviation).}
\label{tab:ks_distance}
\small
\begin{tabular}{lccc}
\toprule
 & \textbf{SS-GEN} & \textbf{Fitted multivariate $t$} & \textbf{Na\"ive resampling} \\
\midrule
\multicolumn{4}{l}{\textit{Setup A} ($n=500$, $|\mathcal Y|=50$)} \\[2pt]
$\alpha=0.95$ & $\mathbf{0.071\ (0.002)}$ & $0.078\ (0.004)$ & $0.171\ (0.053)$ \\
$\alpha=0.99$ & $\mathbf{0.115\ (0.003)}$ & $0.138\ (0.005)$ & $0.380\ (0.140)$ \\[6pt]
\multicolumn{4}{l}{\textit{Setup B} ($n=1000$, $|\mathcal Y|=100$)} \\[2pt]
$\alpha=0.95$ & $\mathbf{0.041\ (0.003)}$ & $0.125\ (0.005)$ & $0.119\ (0.039)$ \\
$\alpha=0.99$ & $\mathbf{0.058\ (0.003)}$ & $0.151\ (0.005)$ & $0.270\ (0.102)$ \\
\bottomrule
\end{tabular}
\end{table}

Table~\ref{tab:ks_distance} shows that SS-GEN consistently produces the closest approximation to the benchmark conditional tail law across both sample sizes and both confidence levels. The advantage is particularly pronounced in the more data-scarce regime $\alpha=0.99$, where naive resampling deteriorates substantially and the fitted multivariate $t$ model exhibits a persistent bias.

\section{Conclusion}
This paper introduces a framework for simulating multivariate tail events and estimating rare-event probabilities in both heavy-tailed and light-tailed settings by decomposing the tail distribution into an analytically known radial component and a nonparametric angular component learned from exceedance data via deep generative models. 
Unlike prior generative approaches to extremes, our method comes with formal approximation guarantees. Numerical experiments on CVaR estimation corroborate our theoretical findings.

\bibliography{ref}
\bibliographystyle{plainnat}
\newpage
\appendix
\section{Proofs}\label{sec:proofs}
\noindent \textbf{Proof of Lemma~\ref{lem:radius_angle_split}:}  Consider the transformation
$T(\zz)=\bigl(\|\zz\|,\zz/\|\zz\|\bigr)=(r,\pphi)$.
Using the standard Jacobian change-of-variables formula,
$f_{R,\Pphi}(r,\pphi)
=
f_{\XX}(\zz)/|J(\zz)|$.
Now, $\zz = r\pphi$ by definition, and therefore, $|J(\zz)| = \|\zz\|^{1-d}$
Therefore $f_{R,\Pphi}(r,\pphi)
=
f_{\XX}(r\pphi)\,r^{d-1}$.\qed

\noindent \textbf{Proof of Proposition~\ref{prop:Polar-to-MRV}:} By Lemma~\ref{lem:radius_angle_split},
$f_{R,\Pphi}(r,\pphi)=r^{d-1}f_{\XX}(r\pphi)$.
Hence, by Assumption~\ref{assume:ht_data}, 
$f_{R,\Pphi}(r,\pphi)
=
r^{-(s+1)}\varphi^\star(\pphi)\bigl(1+\varepsilon(r,\pphi)\bigr)$,
where
$\eta_t:=\sup_{r>t,\ \pphi\in S^{d-1}_\infty} |\varepsilon(r,\pphi)| \to 0$ as  $t\to\infty$.
Now, 
\begin{align*}
    f_{R,\Pphi}(r,\pphi\mid R>t) &= f^{(t)}(\pphi) \frac{f_{R,\Pphi}(r,\pphi)}{\int_t^\infty f_{R,\Pphi}(y,\pphi) dy} \mv{1}(r>t) \\ &=f^{(t)}(\pphi)  \frac{r^{-(s+1)} \varphi^\star(\pphi) (1+\varepsilon(r,\pphi)) }{\int_t^\infty y^{-(s+1)} \varphi^\star(\pphi) (1+\varepsilon(y,\pphi)) dy }\mv{1}(r>t) \\
    &= st^s f^{(t)}(\pphi) r^{-(s+1)} \frac{(1+\varepsilon(r,\pphi))}{st^s \int_t^\infty y^{-(s+1)}  (1+\varepsilon(y,\pphi)) dy}
\end{align*}
Since $|\varepsilon(y,\pphi)|\le \eta_t$ for all $y>t$, uniformly in $\pphi$, $
1-\eta_t
\le
st^s\int_t^\infty y^{-(s+1)}(1+\varepsilon(y,\pphi))\,dy
\le
1+\eta_t$,
because $st^s\int_t^\infty y^{-(s+1)}\,dy=1$. Thus the denominator is $1+o(1)$ uniformly in $\pphi$, while the numerator is also $1+o(1)$ uniformly over $r>t,\pphi$. Defining
\[
\varepsilon_t(r,\pphi):=
\frac{1+\varepsilon(r,\pphi)}
{st^s\int_t^\infty y^{-(s+1)}(1+\varepsilon(y,\pphi))\,dy}-1
\]
we obtain
$
f_{R,\Pphi}(r,\pphi\mid R>t)
=
st^s f^{(t)}(\pphi) r^{-(s+1)}\bigl(1+\varepsilon_t(r,\pphi)\bigr)$.   Moreover, since both numerator and denominator lie in \([1-\eta_t,1+\eta_t]\), $\sup_{r>t,\pphi}\bigl|\varepsilon_t(r,\pphi)\bigr|
\le
\tfrac{2\eta_t}{1-\eta_t}\to 0,$
which proves the result.\qed

\noindent \textbf{Proof of Theorem~\ref{thm:heavy_vanishing_RE}:} From \eqref{eqn:SSGEN_Ht},
\[
\frac{f_{R,\Pphi}(r,\pphi)}{\fgen^{(t)}(r,\pphi)} - 1
=
\begin{cases}
0, & \text{if } r\le t,\\[1ex]
\begin{aligned}
\frac{f_{R,\Pphi}(r,\pphi)}
{p_t\, s t^s r^{-(s+1)} f^{(t)}(\pphi)} - 1
\end{aligned}
& \text{if } r>t.
\end{cases}
\]
Then, 
\[
\sup_{(r,\pphi)} \left|\frac{f_{R,\Pphi}(r,\pphi)}{\fgen^{(t)}(r,\pphi)} - 1\right| =  \sup_{r>t,\pphi}  \left| \frac{p_t f_{R,\Pphi}(r,\pphi \mid R>t)}{p_t st^s r^{-(s+1)} f^{(t)}(\pphi)}- 1\right|  = \sup_{r>t,\pphi}  \left| \frac{f_{R,\Pphi}(r,\pphi \mid R>t)}{st^s r^{-(s+1)} f^{(t)}(\pphi)}- 1\right| 
\]
The last term converges to $0$ as $t\to\infty$ owing to Proposition~\ref{prop:Polar-to-MRV}, concluding the proof.\qed

\noindent \textbf{Proof of Proposition~\ref{prop:Polar-to-Gamma}:} Recall that under Assumption~\ref{assume:lt_data}, for $\pphi\in \mathcal S^{d-1}\cap \mathcal E$ (a compact set), 
$\log f(r\pphi)  = -r^{\gamma} \Lambda^\star(\pphi) (1+\delta(r,\pphi))$, where $\sup_{r>t,\pphi} |\delta(r,\pphi)| \to 0 $ as $t\to\infty$. Then, 
\begin{align*}
    f_{R,\Pphi}(r,\pphi\mid R>t)  = f^{(t)}(\pphi) \frac{r^{d-1} \exp(-r^{\gamma} \Lambda^\star(\pphi) (1+\delta(r,\pphi)))}{\int_t^\infty y^{d-1} \exp(-y^{\gamma} \Lambda^\star(\pphi) (1+\delta(y,\pphi)) dy} \mv{1}(r>t). \qed
\end{align*}

\noindent\textbf{Proof of Theorem~\ref{thm:light_vanishing_LE}:} By definition of $f^{(t)}_{\mathrm{gen}}$, it
suffices to consider $r>t$. Define
\[
I^{\delta}_{t}(\phi):=\int_{t}^{\infty}y^{d-1}
\exp\bigl(-y^{\gamma}\Lambda^{\star}(\phi)(1+\delta(y,\phi))\bigr)\,dy,
\qquad
I_{t}(\phi):=\int_{t}^{\infty}y^{d-1}
\exp\bigl(-y^{\gamma}\Lambda^{\star}(\phi)\bigr)\,dy.
\]
Proposition 4.1 gives, for $r>t$,
\begin{align*}
\log f_{R,\Phi}(r,\phi)
&=\log p_t+\log f^{(t)}(\phi)+(d-1)\log r
-r^{\gamma}\Lambda^{\star}(\phi)(1+\delta(r,\phi))-\log I^{\delta}_{t}(\phi),\\
\log f^{(t)}_{\mathrm{gen}}(r,\phi)
&=\log p_t+\log f^{(t)}(\phi)+(d-1)\log r
-r^{\gamma}\Lambda^{\star}(\phi)-\log I_{t}(\phi),
\end{align*}
so that
\begin{equation}\label{eq:logdiff}
\log f_{R,\Phi}(r,\phi)-\log f^{(t)}_{\mathrm{gen}}(r,\phi)
=-r^{\gamma}\Lambda^{\star}(\phi)\,\delta(r,\phi)
-\bigl(\log I^{\delta}_{t}(\phi)-\log I_{t}(\phi)\bigr).
\end{equation}
Since $\Lambda^{\star}$ is the limit of continuously converging functions it is itself continuous (\cite{RockafellarWets1998}, Theorem 7.14) and strictly
positive on the compact set $E\cap\mathbb{S}^{d-1}$. Hence, there exist
$0<\lambda_{-}\le\lambda_{+}<\infty$ with
$\lambda_{-}\le\Lambda^{\star}(\phi)\le\lambda_{+}$ for all $\phi$. Set
\[
\bar\delta_t:=\sup_{r>t,\;\phi\in E\cap\mathbb{S}^{d-1}}|\delta(r,\phi)|,
\]
so that $\bar\delta_t\to 0$ as $t\to\infty$. We bound the two terms in
\eqref{eq:logdiff} in turn. For all $r>t$ and $\phi$,
\begin{equation}\label{eq:term1}
\bigl|r^{\gamma}\Lambda^{\star}(\phi)\,\delta(r,\phi)\bigr|
\le \bar\delta_t\,\lambda_{+}\,r^{\gamma}.
\end{equation}
Since $|\delta(y,\phi)|\le\bar\delta_t$ for all $y>t$,
\[
J_t\bigl((1+\bar\delta_t)\Lambda^{\star}(\phi)\bigr)
\;\le\; I^{\delta}_{t}(\phi)\;\le\;
J_t\bigl((1-\bar\delta_t)\Lambda^{\star}(\phi)\bigr),
\qquad
J_t(a):=\int_t^{\infty}y^{d-1}e^{-ay^{\gamma}}\,dy.
\]
The change of variables $x=ay^{\gamma}$ gives
$J_t(a)=\gamma^{-1}a^{-d/\gamma}\,\Gamma(d/\gamma,\,at^{\gamma})$, and the
incomplete-gamma asymptotic $\Gamma(\alpha,x)\sim x^{\alpha-1}e^{-x}$ as
$x\to\infty$, valid uniformly for
$a\in[(1-\bar\delta_t)\lambda_{-},\,(1+\bar\delta_t)\lambda_{+}]$, yields
\[
\log J_t(a)=-at^{\gamma}+(d-\gamma)\log t-\log(\gamma a)+O(t^{-\gamma})
\]
uniformly over this range. Applying this at
$a=(1\pm\bar\delta_t)\Lambda^{\star}(\phi)$ and at $a=\Lambda^{\star}(\phi)$,
and using $I_t(\phi)=J_t(\Lambda^{\star}(\phi))$, there exist $\zeta_t\to 0$
such that
\begin{equation}\label{eq:term2}
\sup_{\phi\in E\cap\mathbb{S}^{d-1}}
\bigl|\log I^{\delta}_{t}(\phi)-\log I_{t}(\phi)\bigr|
\le \zeta_t\,t^{\gamma}.
\end{equation}
Combining \eqref{eq:logdiff}-\eqref{eq:term2}, for all $r>t$ and $\phi$,
\begin{equation}\label{eq:numerator}
\bigl|\log f_{R,\Phi}(r,\phi)-\log f^{(t)}_{\mathrm{gen}}(r,\phi)\bigr|
\le\bigl(\bar\delta_t\lambda_{+}+\zeta_t\bigr)\,(r^{\gamma}+t^{\gamma}).
\end{equation}

\smallskip
\noindent \textbf{Lower bound on the denominator:} Marginalizing over $r$ in Proposition~\ref{prop:Polar-to-Gamma} gives $p_t\,f^{(t)}(\phi)=\int_t^{\infty}f_{R,\Phi}(y,\phi)\,dy$, and by
Assumption~\ref{assume:lt_data},
$f_{R,\Phi}(y,\phi)\le y^{d-1}
\exp\bigl(-(1-\bar\delta_t)y^{\gamma}\Lambda^{\star}(\phi)\bigr)$ for $y>t$.
Hence $p_t f^{(t)}(\phi)\le J_t\bigl((1-\bar\delta_t)\Lambda^{\star}(\phi)\bigr)$,
and the same incomplete-gamma estimate gives, uniformly in $\phi$,
\[
-\log\bigl(p_t\,f^{(t)}(\phi)\bigr)
\;\ge\;(1-\bar\delta_t)\Lambda^{\star}(\phi)\,t^{\gamma}-C\log t
\]
for a constant $C$ depending only on $(d,\gamma,\lambda_{-},\lambda_{+})$.
Together with $\log I_t(\phi)=-\Lambda^{\star}(\phi)t^{\gamma}+O(\log t)$
uniformly in $\phi$, we obtain, for $r>t$,
\begin{align*}
-\log f^{(t)}_{\mathrm{gen}}(r,\phi)
&=-\log\bigl(p_t f^{(t)}(\phi)\bigr)-(d-1)\log r
+\Lambda^{\star}(\phi)\,r^{\gamma}+\log I_{t}(\phi)\\
&\ge \Lambda^{\star}(\phi)\,r^{\gamma}
-\bar\delta_t\,\Lambda^{\star}(\phi)\,t^{\gamma}-C'\log r\\
&\ge (\lambda_{-}-\bar\delta_t\lambda_{+})\,r^{\gamma}-C'\log r,
\end{align*}
where the last step uses $t<r$. Since $\log r/r^{\gamma}$ is decreasing for
$r$ large, for all $t$ sufficiently large we have
$C'\log r\le(\lambda_{-}/4)\,r^{\gamma}$ and
$\bar\delta_t\lambda_{+}\le\lambda_{-}/4$, whence, using $t<r$ once more,
\begin{equation}\label{eq:denominator}
-\log f^{(t)}_{\mathrm{gen}}(r,\phi)
\;\ge\;\frac{\lambda_{-}}{2}\,r^{\gamma}
\;\ge\;\frac{\lambda_{-}}{4}\,(r^{\gamma}+t^{\gamma})\;>\;0
\qquad\text{for all }r>t,\ \phi;
\end{equation}
in particular $\log f^{(t)}_{\mathrm{gen}}(r,\phi)<0$ throughout this range.

\smallskip
Dividing \eqref{eq:numerator} by \eqref{eq:denominator},
\[
\sup_{r>t,\;\phi\in E\cap\mathbb{S}^{d-1}}
\left|\frac{\log f_{R,\Phi}(r,\phi)}{\log f^{(t)}_{\mathrm{gen}}(r,\phi)}-1\right|
=\sup_{r>t,\;\phi}
\frac{\bigl|\log f_{R,\Phi}(r,\phi)-\log f^{(t)}_{\mathrm{gen}}(r,\phi)\bigr|}
{\bigl|\log f^{(t)}_{\mathrm{gen}}(r,\phi)\bigr|}
\le\frac{4\bigl(\bar\delta_t\lambda_{+}+\zeta_t\bigr)}{\lambda_{-}}
\;\longrightarrow\;0. \qquad\square
\]

\noindent \textbf{Proof of Lemma~\ref{lem:lt_correct}:} 
By construction, $\Pphi^\star \sim f^{(t)}$. Conditional on $\Pphi^\star=\pphi$, we generate $Z \sim \Gamma\!\left( d/\gamma,\Lambda^\star(\pphi)\right)$,
and $R=Z^{1/\gamma}$.
A change of variables gives $f_{R\mid \Pphi^\star=\pphi}(r)\propto r^{d-1}\exp\!\bigl(-r^\gamma \Lambda^\star(\pphi)\bigr)$.
Rejecting until $R>t$ 
yields
\[
f_{R_t\mid \Pphi^\star=\pphi}(r)
=
\frac{
r^{d-1}\exp\!\bigl(-r^\gamma \Lambda^\star(\pphi)\bigr)
}{
\int_t^\infty y^{d-1}\exp\!\bigl(-y^\gamma \Lambda^\star(\pphi)\bigr)\,dy
}\mathbf 1(r>t).
\]
Combining this with $\Pphi^\star\sim f^{(t)}$ gives the joint density
\[
f^{(t)}(\pphi)
\frac{
r^{d-1}\exp\!\bigl(-r^\gamma \Lambda^\star(\pphi)\bigr)
}{
\int_t^\infty y^{d-1}\exp\!\bigl(-y^\gamma \Lambda^\star(\pphi)\bigr)\,dy
}\mathbf 1(r>t),
\]
which is precisely the conditional density induced by the tail component of $\fgen^{(t)}$ on $\{r>t\}$. \qed

\noindent \textbf{Proof of Theorem~\ref{thm:est_rare_probs}:}
Write $\Lambda^\star_u:=\inf_{\zz\in\mathcal R_u}\Lambda^\star(\zz)$, and let
\[
    \mathcal L_t(\zz):=\frac{\fgen^{(t)}(\zz)}{f_{\XX}(\zz)},
    \qquad\text{so that}\qquad
    P_{\mathrm{gen}}(\mathcal R_u)=E_P\bigl[\mathcal L_t(\XX);\,\mathcal R_u\bigr].
\]
Note that $\Lambda^\star$ is homogeneous of degree $\gamma$: 
$\Lambda^\star(\zz)\ge\lambda_-\|\zz\|^{\gamma}$, where
$\lambda_-:=\inf_{\pphi\in\mathcal E \cap \mathcal S^{d-1}}\Lambda^\star(\pphi)>0$.

\smallskip
\noindent\textbf{i) Assumption~\ref{assume:ht_data} holds.}
By Theorem~\ref{thm:heavy_vanishing_RE}, the relative error of $\fgen^{(t)}$
vanishes uniformly over the entire state space (the ratio being identically
$1$ on $\{\|\zz\|\le t\}$). Hence for every $\varepsilon>0$ and all
sufficiently large $u$, $1-\varepsilon\le\mathcal L_t(\zz)\le 1+\varepsilon$
for all $\zz$, and therefore
\[
    (1-\varepsilon)P(\mathcal R_u)
    \le
    P_{\mathrm{gen}}(\mathcal R_u)
    \le
    (1+\varepsilon)P(\mathcal R_u).
\]
Since $\varepsilon>0$ is arbitrary,
$P_{\mathrm{gen}}(\mathcal R_u)/P(\mathcal R_u)\to 1$.

\smallskip
\noindent\textbf{ii) Assumption~\ref{assume:lt_data} holds.}
We first note the following consequence of \cite{deo2025achieving}, Theorem 2: for each fixed $a>0$,
\begin{equation}\label{eqn:laplace_sandwich}
\log\int_{\mathcal R_u} f_{\XX}(\zz)^{a}\,d\zz
=
-a\,\Lambda^\star_u\bigl(1+o(1)\bigr),
\qquad u\to\infty .
\end{equation}
Fix $\varepsilon\in(0,1)$. By Assumption~\ref{assume:lt_data}, for all
sufficiently large $u$,
\begin{equation}\label{eqn:pointwise_sandwich}
\exp\bigl(-a(1+\varepsilon)\Lambda^\star(\zz)\bigr)
\le
f_{\XX}(\zz)^{a}
\le
\exp\bigl(-a(1-\varepsilon)\Lambda^\star(\zz)\bigr)
\qquad\text{uniformly over }\zz\in\mathcal R_u .
\end{equation}

\smallskip
\noindent\textbf{Upper bound.} Fix $K>1/(1-\varepsilon)$ and set
$r_u:=\bigl(K\Lambda^\star_u/\lambda_-\bigr)^{1/\gamma}$.
On $\mathcal R_u\cap\{\zz:\|\zz\|\le r_u\}$, the upper bound in
\eqref{eqn:pointwise_sandwich} together with
$\Lambda^\star(\zz)\ge\Lambda^\star_u$ yields
\[
\int_{\mathcal R_u\cap\{\|\zz\|\le r_u\}} f_{\XX}(\zz)^{a}\,d\zz
\;\le\;
c_d\,r_u^{d}\,
\exp\bigl(-a(1-\varepsilon)\Lambda^\star_u\bigr),
\]
where $c_d$ denotes the volume of the unit ball in $\mathbb R^d$; the
logarithm of the right-hand side is
$-a(1-\varepsilon)\Lambda^\star_u+O\bigl(\log\Lambda^\star_u\bigr)$. On the
complementary region, $\Lambda^\star(\zz)\ge\lambda_-\|\zz\|^{\gamma}$, so
passing to polar coordinates and applying the incomplete-gamma asymptotics
from the proof of Theorem~\ref{thm:light_vanishing_LE},
\[
\log\int_{\{\|\zz\|> r_u\}}
\exp\bigl(-a(1-\varepsilon)\lambda_-\|\zz\|^{\gamma}\bigr)\,d\zz
=
-a(1-\varepsilon)\lambda_-\,r_u^{\gamma}\bigl(1+o(1)\bigr)
=
-a(1-\varepsilon)K\,\Lambda^\star_u\bigl(1+o(1)\bigr),
\]
which is of smaller exponential order than the first contribution by the
choice of $K$. Combining the two regions,
\begin{equation}\label{eqn:upper_final}
\log\int_{\mathcal R_u} f_{\XX}(\zz)^{a}\,d\zz
\;\le\;
-a(1-\varepsilon)\,\Lambda^\star_u+o\bigl(\Lambda^\star_u\bigr).
\end{equation}

\smallskip
\noindent \textbf{Lower Bound:} Recall that $u^{-1}L(u\,\cdot)$ converges
continuously to $L^{\star}$ on $\mathcal E$. By
\cite{RockafellarWets1998}, Theorem~7.14, $L^{\star}$ is continuous and the
convergence is uniform on compact subsets of $\mathcal E$ and homogeneous. Since $\Lambda^\star$ is
continuous and $\Lambda^\star(\zz)\ge\lambda_-\|\zz\|^{\gamma}$ for all $\zz$, the closed set
$A:=\{\zz\in\mathcal E:L^{\star}(\zz)\ge1\}$ is bounded away from the
origin, and
$\Lambda^{\star}_{\infty}:=\inf_{\zz\in A}\Lambda^\star(\zz)\in(0,\infty)$
is attained at some $\zz^{\star}\in A$, the infimum reducing to the compact
set $A\cap\{\zz:\|\zz\|\le M\}$ for any
$M^{\gamma}>\Lambda^{\star}_{\infty}/\lambda_{-}$.

Fix $\delta\in(0,1)$ and set $\zz_{\delta}:=(1+\delta)\zz^{\star}$, so that
$L^{\star}(\zz_{\delta})=1+\delta$ by homogeneity. By continuity of
$L^{\star}$ and $\Lambda^\star$, choose $\rho>0$ so that
$D_{\delta}:=B(\zz_{\delta},\rho)\cap\mathcal E$ satisfies
$\inf_{D_{\delta}}L^{\star}\ge1+\delta/2$ and
$\sup_{D_{\delta}}\Lambda^\star\le\bigl((1+\delta)^{\gamma}+\delta\bigr)
\Lambda^{\star}_{\infty}$; since $\mathcal E$ is a full-dimensional closed
cone, $\mathrm{Leb}(D_{\delta})>0$. Uniform convergence of
$u^{-1}L(u\,\cdot)$ on $D_{\delta}$ gives $L(u\yy)>u$ for every
$\yy\in D_{\delta}$ and all large $u$, so that
$B_u:=u\,D_{\delta}\subseteq\mathcal R_u$. By homogeneity of
$\Lambda^\star$,
\begin{equation}\label{eqn:sup_lambda_Bu}
\Lambda^\star_u
\;\le\;
\sup_{\zz\in B_u}\Lambda^\star(\zz)
\;=\;
u^{\gamma}\sup_{\yy\in D_{\delta}}\Lambda^\star(\yy)
\;\le\;
\bigl((1+\delta)^{\gamma}+\delta\bigr)\,\Lambda^{\star}_{\infty}\,u^{\gamma}.
\end{equation}
Conversely, any near-minimizing $\zz_u\in\mathcal R_u$ may be taken to
satisfy $\|\zz_u\|\le Mu$, since
$\Lambda^\star(\zz)\ge\lambda_-M^{\gamma}u^{\gamma}$ otherwise; then
$u^{-1}\zz_u$ lies in a compact set on which $u^{-1}L(u\,\cdot)\to
L^{\star}$ uniformly, so its limit points lie in $A$ and
$\liminf_u u^{-\gamma}\Lambda^{\star}_u\ge\Lambda^{\star}_{\infty}$.
Together with \eqref{eqn:sup_lambda_Bu} and $\delta\downarrow0$, this gives
$u^{-\gamma}\Lambda^{\star}_u\to\Lambda^{\star}_{\infty}$; in particular
$\log\mathrm{Leb}(B_u)=d\log u+O(1)=o(\Lambda^{\star}_u)$ and
$\sup_{B_u}\Lambda^\star\le\bigl((1+\delta)^{\gamma}+\delta\bigr)(1+o(1))
\Lambda^{\star}_u$.

Consequently, by the lower bound in \eqref{eqn:pointwise_sandwich},
\[
\int_{\mathcal R_u} f_{\XX}(\zz)^{a}\,d\zz
\;\ge\;
\mathrm{Leb}(B_u)\,
\exp\Bigl(-a(1+\varepsilon)\sup_{\zz\in B_u}\Lambda^\star(\zz)\Bigr)
\;\ge\;
\exp\Bigl(-a(1+\varepsilon)\bigl((1+\delta)^{\gamma}+\delta\bigr)
\Lambda^{\star}_u\bigl(1+o(1)\bigr)\Bigr).
\]
Combining with \eqref{eqn:upper_final} and letting $\delta\downarrow0$,
$\varepsilon\downarrow0$ proves \eqref{eqn:laplace_sandwich}; in particular,
taking $a=1$,
\begin{equation}\label{eqn:logP_Ru}
\log P(\XX\in\mathcal R_u)=-\Lambda^\star_u\bigl(1+o(1)\bigr).
\end{equation}
If $\zz_u\in\mathcal R_u$ satisfied
$\|\zz_u\|=o(u)$, continuous convergence would give
$u^{-1}L(\zz_u)\to L^\star(\mv 0)=0$; hence $\|\zz\|\ge cu$ on
$\mathcal R_u$ for some $c>0$ and all large $u$, and since $t=o(u)$,
$\mathcal R_u\subset\{\|\zz\|>t\}$.
Fix $\varepsilon\in(0,1)$. 
Theorem~\ref{thm:light_vanishing_LE} gives for all sufficiently large $u$,
\[
\sup_{\zz\in\mathcal R_u}
\left|\frac{\log\fgen^{(t)}(\zz)}{\log f_{\XX}(\zz)}-1\right|
\le\varepsilon,
\]
with $\log f_{\XX}(\zz)<0$ on $\mathcal R_u$. Writing
$\log\mathcal L_t(\zz)
=\bigl(\log\fgen^{(t)}(\zz)/\log f_{\XX}(\zz)-1\bigr)\log f_{\XX}(\zz)$,
it follows that on $\mathcal R_u$,
\[
f_{\XX}(\zz)^{\varepsilon}
\;\le\;
\mathcal L_t(\zz)
\;\le\;
f_{\XX}(\zz)^{-\varepsilon}.
\]
Multiplying by $f_{\XX}$ and integrating over $\mathcal R_u$,
\[
\int_{\mathcal R_u} f_{\XX}(\zz)^{1+\varepsilon}\,d\zz
\;\le\;
P_{\mathrm{gen}}(\mathcal R_u)
\;\le\;
\int_{\mathcal R_u} f_{\XX}(\zz)^{1-\varepsilon}\,d\zz,
\]
and \eqref{eqn:laplace_sandwich} with $a=1\pm\varepsilon$, together with
\eqref{eqn:logP_Ru}, yields
\[
(1+\varepsilon)\bigl(1+o(1)\bigr)\log P(\XX\in\mathcal R_u)
\;\le\;
\log P_{\mathrm{gen}}(\mathcal R_u)
\;\le\;
(1-\varepsilon)\bigl(1+o(1)\bigr)\log P(\XX\in\mathcal R_u).
\]
Since $\log P(\XX\in\mathcal R_u)<0$, dividing by it reverses the
inequalities, giving
\[
1-\varepsilon+o(1)
\;\le\;
\frac{\log P_{\mathrm{gen}}(\mathcal R_u)}{\log P(\XX\in\mathcal R_u)}
\;\le\;
1+\varepsilon+o(1).
\]
Letting $\varepsilon\downarrow0$ completes the proof.\qed

\section{Implementation Details}

\noindent\textbf{GAN Implementation Details.}
We employ a fully connected GAN to learn the angular exceedance distribution. Let
\[
\Theta_t = \left\{ \theta_i = \frac{X_i}{\|X_i\|} : \|X_i\| \geq t \right\} \subset \mathbb{S}^{d-1}
\]
denote the empirical angular sample exceeding the radial threshold~$t$. The generator $G_\eta : \mathbb{R}^{d_z} \to \mathbb{R}^d$ maps latent variables $z \sim \nu_z$ to ambient angular vectors, where the latent dimension is $d_z = d - 1$ and $\nu_z$ is the uniform distribution on the $d_z$-dimensional unit ball. The discriminator $D_\omega : \mathbb{R}^d \to [0, 1]$ distinguishes empirical from generated angular samples via the standard minimax objective:
\[
\min_{\eta} \max_{\omega} \left\{ \mathbb{E}_{\theta \sim \widehat{P}_t} [\log D_\omega(\theta)] + \mathbb{E}_{z \sim \nu_z} [\log(1 - D_\omega(G_\eta(z)))] \right\},
\]
where $\widehat{P}_t$ is the empirical distribution of $\Theta_t$.

Both networks are implemented as multilayer perceptrons. The generator uses hidden layer widths of $128, 64, 32, 16, 8$, and $4$, with each hidden layer consisting of a dense affine transformation followed by Leaky ReLU activation (negative slope $0.2$) and batch normalization. The final output layer is a dense mapping from the last hidden width back to the full ambient dimension~$d$, with a \texttt{tanh} activation; the decreasing hidden widths form a bottleneck that forces the network to learn a compressed representation of the angular law before projecting back to $\mathbb{R}^d$. The discriminator mirrors the generator's hidden widths, using dense layers followed by Leaky ReLU activations (negative slope $0.2$) without batch normalization, and terminates in a single sigmoid output.

To ensure the generated samples reside on the unit sphere, the raw generator output $\widetilde{\theta} = G_\eta(z)$ is $\ell^2$-normalized via the projection
\[
\Pi(\widetilde{\theta}) = \frac{\widetilde{\theta}}{\|\widetilde{\theta}\|},
\]
so that the effective angular generator used by SSGEN is the composition $\Pi \circ G_\eta$.

Both networks are optimized using Adam ($\mathrm{lr}=10^{-4}$, $\beta_1=0.5$) with the binary cross-entropy loss. Each training iteration consists of one discriminator update on a batch of real angular samples followed by a batch of generated samples, and one generator update through the frozen discriminator. To monitor convergence, we evaluate a random-projection diagnostic every $1{,}000$ epochs: both the empirical angular sample and a batch of generated samples (each $\ell^2$-normalized) are projected onto $256$ random unit vectors in $\mathbb{R}^d$, and the one-dimensional Wasserstein-1 and Kolmogorov--Smirnov distances are computed on each projection. The mean projected Wasserstein distance serves as the primary fidelity metric for model selection.

Training proceeds for a maximum of $30{,}000$ epochs. Figure~\ref{fig:gan_training} shows representative training curves for the $d=10$ mixture experiment. The projected Wasserstein and KS distances reach their minimum during the early phase of training (approximately epochs $4{,}000$--$7{,}000$). Beyond this point, the discriminator loss decreases to near zero while the generator loss rises, indicating that the discriminator begins to dominate - this is
a well-known instability in GAN training. The projection-based diagnostics deteriorate accordingly. We therefore select the generator checkpoint corresponding to the lowest mean projected Wasserstein distance rather than using the final iterate. This checkpoint-based model selection strategy ensures that SS-GEN uses the best angular generator encountered during training.

\begin{figure}[htbp]
    \centering
    \includegraphics[width=0.95\linewidth]{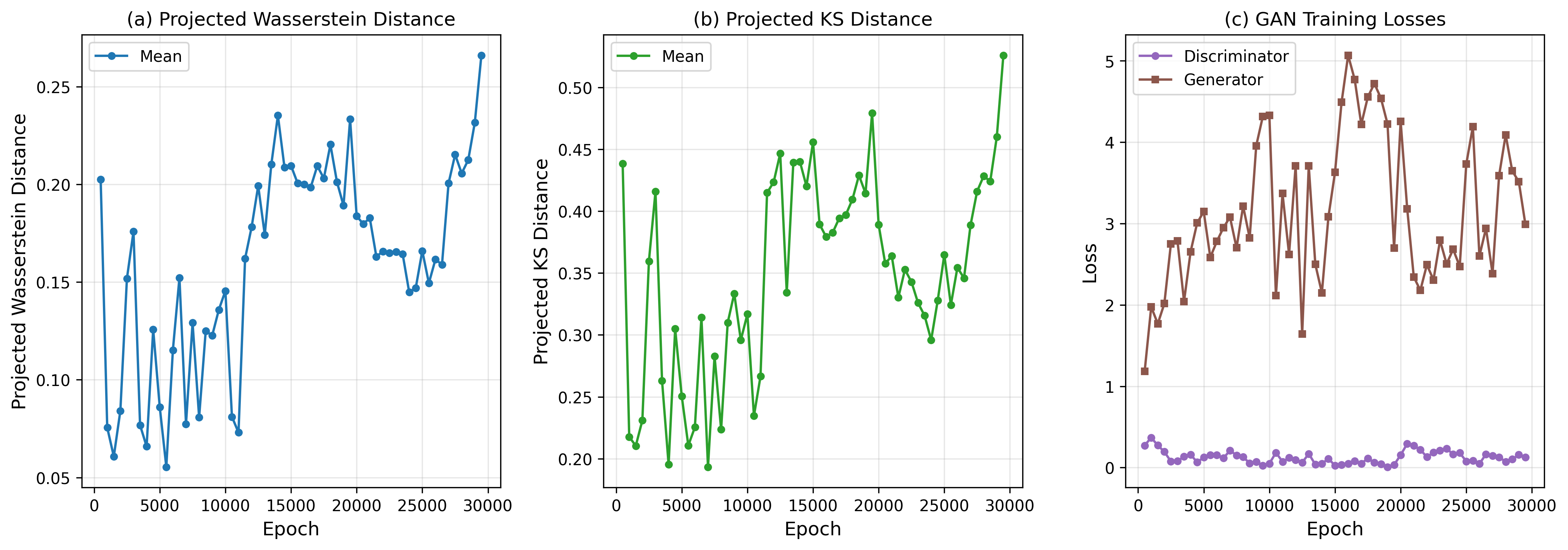}
    \caption{GAN training diagnostics for the $d=10$ mixture experiment. (a)~Mean projected Wasserstein distance across $256$ random projections. (b)~Corresponding projected KS distances. (c)~Generator and discriminator losses.}
    \label{fig:gan_training}
\end{figure} 
\noindent\textbf{VaR and CVaR Estimation Algorithm.} 
Algorithm~\ref{algo:var_cvar_ssgen_ht} details the procedure for computing
portfolio-level VaR and CVaR from SS-GEN samples in Experiment~1
(Section~\ref{sec:numericals}). The construction is stated for the
heavy-tailed case (Assumption~\ref{assume:ht_data}), consistent with the
scope of the experiments; the Weibull-type analogue replaces the Pareto
radial draw in Step~I(iii) with the truncated-Gamma draw of
Algorithm~\ref{algo:generate_extremes_lt}.

\begin{algorithm}[h]
  \caption{Computation of VaR and CVaR using \SSGEN}
  \label{algo:var_cvar_ssgen_ht}
  \KwIn{Data samples $\XX_1,\ldots,\XX_n$, portfolio weights $\xx$,
  threshold $t$, confidence level $\alpha$, number of synthetic samples $M$,
  loss function $L$, trained angular model from
  Algorithm~\ref{algo:generate_extremes}, tail-index estimate $\widehat s$}



\noindent\textbf{I) Generation Step}
\begin{enumerate}
    \item[(i)] Define the non-tail index set, synthetic index set, and empirical tail probability as
    \[
        \mathcal{Y}^c = \{1,\ldots,n\}\setminus \mathcal{Y},
        \qquad
        \mathcal{M} = \{1,\ldots,M\},
        \qquad
        \hat p = \frac{|\mathcal{Y}|}{n}.
    \]

    \item[(ii)] For each $j \in \mathcal{M}$, sample $\Pphi_j^*$ from the trained DGM.

    \item[(iii)] For each $j \in \mathcal{M}$, sample $R_{t,j}$ from the Pareto exceedance density
    \[
        s t^s r^{-(s+1)}\mathbf{1}(r>t).
    \]

    \item[(iv)] Construct the synthetic tail samples
    \[
        H_{t,j} = R_{t,j}\cdot \Pphi_j^*,
        \qquad j \in \mathcal{M}.
    \]
\end{enumerate}

\noindent\textbf{II) Augmented Empirical Loss Distribution}
\begin{enumerate}
    \item[(i)] Define the augmented empirical loss distribution as
    \begin{equation}
    \label{eqn:augmented_cdf_loss_ssgen_ht}
    \widehat F_L^{\SSGEN}(\ell)
    =
    \frac{1}{n}
    \sum_{i \in \mathcal{Y}^c}
    \mathbf{1}\{L(\xx,\XX_i)\leq \ell\}
    +
    \hat p \frac{1}{M}
    \sum_{j \in \mathcal{M}}
    \mathbf{1}\{L(\xx,H_{t,j})\leq \ell\}.
    \end{equation}
\end{enumerate}

\noindent\textbf{III) Estimation Step}
\begin{enumerate}
    \item[(i)] Estimate the Value-at-Risk at level $\alpha$ by
    \[
        \widehat{\mathrm{VaR}}_{\alpha,\SSGEN}(\xx)
        =
        \inf
        \left\{
        \ell \in \mathbb{R} :
        \widehat F_L^{\SSGEN}(\ell)
        \geq \alpha
        \right\}.
    \]

    \item[(ii)] Estimate the Conditional Value-at-Risk at level $\alpha$ by
    \begin{equation}
    \label{eqn:cvar_augmented_empirical_loss_ssgen_ht}
    \begin{split}
    \widehat{\mathrm{CVaR}}_{\alpha,\SSGEN}(\xx)
    &=
    \frac{1}{1-\alpha}
    \Bigg[
    \sum_{i \in \mathcal{Y}^c}
    \frac{1}{n}
    L(\xx,\XX_i)
    \mathbf{1}
    \left\{
    L(\xx,\XX_i)
    \geq
    \widehat{\mathrm{VaR}}_{\alpha,\SSGEN}(\xx)
    \right\}
    \\
    &\qquad
    +
    \sum_{j \in \mathcal{M}}
    \frac{\hat p}{M}
    L(\xx,H_{t,j})
    \mathbf{1}
    \left\{
    L(\xx,H_{t,j})
    \geq
    \widehat{\mathrm{VaR}}_{\alpha,\SSGEN}(\xx)
    \right\}
    \Bigg].
    \end{split}
    \end{equation}
\end{enumerate}

\noindent\textbf{Output: }
$\widehat{\mathrm{VaR}}_{\alpha,\SSGEN}(\xx)$ and
$\widehat{\mathrm{CVaR}}_{\alpha,\SSGEN}(\xx)$.

\end{algorithm}

\end{document}